\def\BibTeX{{\rm B\kern-.05em{\sc i\kern-.025em b}\kern-.08em
    T\kern-.1667em\lower.7ex\hbox{E}\kern-.125emX}}
\newtheorem{theorem}{Theorem}
\newtheorem{example}{Example}
\newtheorem{note}{Note}
\newcommand{\1}{{\bf 1}} 
\newcommand{\E}{\mathsf{E}} 
\newcommand{\card}[1]           {\left| #1\right|}
\newcommand{\bea}{\begin{eqnarray}}
\newcommand{\eea}{\end{eqnarray}}
\newcommand{\beas}{\begin{eqnarray*}}
\newcommand{\eeas}{\end{eqnarray*}}
\begin{document}
\title{Probing Capacity}
\author{\IEEEauthorblockN{Himanshu Asnani\IEEEauthorrefmark{1}, Haim
Permuter\IEEEauthorrefmark{2} and 
Tsachy Weissman\IEEEauthorrefmark{3}} 
\thanks{\IEEEauthorblockA{\IEEEauthorrefmark{1}Stanford University, Email:
asnani@stanford.edu.}} 
\thanks{\IEEEauthorblockA{\IEEEauthorrefmark{2}Ben Gurion University, Email:
haimp@bgu.ac.il.}} 
\thanks{\IEEEauthorblockA{\IEEEauthorrefmark{3}Stanford University, Email:
tsachy@stanford.edu.}}}

\maketitle


\maketitle

\begin{abstract}
We consider the problem of optimal probing of states of a channel by
transmitter and receiver for maximizing rate of reliable communication. The
channel is discrete
memoryless (DMC) with i.i.d. states. The encoder
takes probing actions dependent on the message. It then uses the state
information obtained from probing causally
or
non-causally to generate channel input symbols. The decoder may also take
channel probing actions as a function of the observed channel output and use the
channel state information thus acquired, along with the channel output, to
estimate the message. We refer to the maximum
achievable rate	
for reliable communication for such systems as the
\textquoteleft \textit{Probing Capacity}\textquoteright. We
characterize this capacity when the encoder and decoder actions are cost
constrained. To motivate the problem, we begin by characterizing
 the trade-off between the capacity and fraction of channel states the encoder
is allowed to observe, while the
decoder 
is aware of channel states. In this setting of
\textquoteleft\textit{to observe or not to observe}\textquoteright\ state
at the encoder, we compute certain numerical examples and note a pleasing
phenomenon, where encoder can observe a relatively small fraction of states and
yet communicate at maximum rate, i.e. rate when observing states at encoder is
not cost constrained.	
\end{abstract}

\begin{keywords}
Actions, Channel with States, Cost Constraints,
Gel\textquoteright fand-Pinsker Channel, Probing Capacity,  Shannon Channel, To
observe or not to
observe.
\end{keywords}

\section{Introduction}
Shannon showed the importance of availability of channel state at the encoder
for communication system
in his seminal paper \cite{ShannonChannel}, where he computed capacity of DMC
with i.i.d. states 
available causally to the encoder. This spawned an active research in the area
of
channel coding and was extended to various scenarios, notably 
for storage in computer memory. Kuznetsov and Tsybakov in
\cite{KuznetsovTsybakov}
constructed defect-correcting codes for coding in computer memory with defective
cells. Gel'fand and Pinsker in \cite{GelfandPinsker}, extended work in
\cite{ShannonChannel} to the case where channel states are available
non-causally to the encoder, again with applications for computer memories,
which was further 
researched by Heegard and El Gamal in \cite{HeegardGamal}.
 Keshet, Steinberg and Merhav presented a detailed survey in
\cite{KeshetSteinbergMerhav} on \textit{channel coding in the presence of state
information}, where 
the channel state information (CSI) signal is available at the transmitter
(CSIT) or at the 
receiver (CSIR), or both.\par
Permuter and Weissman introduced the notion of
\textit{actions} in source coding context in \cite{HaimTsachyVendor}. Their
setting is a generalization of the Wyner-Ziv
source coding 
with decoder side information problem (\cite{WynerZiv}), where now the decoder
can take actions based on the index
obtained from the encoder 
to affect the formation or availability of side information. Weissman, in
\cite{TsachyChannel}, studied the channel coding dual where the transmitter
takes actions that affect the formation of channel states. This framework
captures 
various new coding scenarios which include two stage
recording on a memory with defects, motivated by similar problems in
magnetic recording and computer memories. Kittichokechai \textit{et al} in
\cite{Kittichokechai} studied 
a variant of the problem in \cite{HaimTsachyVendor} and \cite{TsachyChannel},
where
encoder and decoder both have action dependent partial side information.
However, in the source coding
formulation of \cite{HaimTsachyVendor}, they 
restricted the actions 
to be taken by decoder while in the channel coding scenario of
\cite{TsachyChannel} and \cite{Kittichokechai}, actions were taken only
by the encoder.
\par
In this paper, we revisit channel coding scenarios but now cost constrained
actions are taken to
acquire any partial or complete
channel state information by the encoder, the decoder or both. 
Our framework is aimed at capturing and understanding the trade offs involved in
natural scenarios where the acquisition of channel state information is
associated with expenditure of costly system resources. The encoder and
decoder actions are cost constrained creating tension between achievable rate
and the cost of acquisition of the channel state (or the defect) information.
Note that our framework differs from those of 
\cite{TsachyChannel} and
\cite{Kittichokechai} where actions
affect the channel, followed by channel encoding. In our scenario
channel
statistics are not affected, i.e., nature generates the state sequence i.i.d
$\sim P_S$. Our work is novel in the sense that not only the encoder but
the decoder also takes actions to acquire channel state information. Encoder
takes actions ($A_e$) depending on messages. Decoder also
takes actions ($A_d$) depending upon
observed channel output. Using their respective actions, encoder and decoder
observe partial states, $S_e$ and $S_d$ through discrete memoryless channel
(DMC), $P_{S_e,S_d|S,A_e,A_d}$. The encoder can causally or non-causally use its
partial state information to generate the
channel input symbols. In this paper, we characterize the fundamental limit of
such a framework and call it
\textbf{Probing Capacity}. When the actions are not taken by the decoder, there
is 
an equivalence between our setting and that of channels with action dependent
states as in
\cite{TsachyChannel}, which we make explicit in Section
\ref{equivalence}.
\par
The rest of the paper is organized as follows. We begin with a motivating
scenario in Section \ref{sample}, where decoder knows the complete state and
the encoder takes message dependent binary actions 
 \textit{to observe or not to observe} the channel state. This is
generalized in Section
\ref{equivalence}, when only encoder takes actions. This section also 
establishes the equivalence
between our framework of optimal probing and that of channels with action
dependent states in \cite{TsachyChannel}. Motivated by the framework of
communication over slow
fading channels, where the information of channel states is to be
exploited on the fly, we have in Section
\ref{causal}
characterization of the \textit{probing capacity} where encoder takes actions to
get
channel states and
use them causally to construct channel inputs and decoder takes actions
strictly causally dependent on channel outputs. Note that in this section,
we characterize a novel and a generalized setting, where both encoder and
decoder take costly actions to get channel state information. Later in this
section,
inspired by coding
on computer memory with defects, we explain the
non-causal case, i.e., when channel
states are used 
non-causally by the encoder to generate channel input symbols and decoder
waits for the entire channel output before taking actions to get channel
states. This in general is a hard problem and we show its equivalence
to a standard relay
channel with infinite lookahead. In Section \ref{example}, we
work out several examples, with some surprising implications. The paper is
concluded in Section \ref{conclusion} with directions of future research. 

\section{To Observe or Not to Observe Channel States at Encoder}
\label{sample}
We begin by explaining the notation to be used throughout this paper.
Let upper case, lower case, and calligraphic letter denote, respectively, random
variables, specific or deterministic values which random variables may assume,
and
their alphabets. For two jointly distributed random variables, $X$ and $Y$, let
$P_X$, $P_{XY}$ and $P_{X|Y}$ respectively denote the marginal of $X$, joint
distribution of $(X,Y)$ and conditional distribution of 
$X$ given $Y$. $X_{m}^{n}$ is a shorthand for $n-m+1$ tuple
$\{X_m,X_{m+1},\cdots,X_{n-1},X_n\}$. We impose the assumption of finiteness of
cardinality on all alphabets, unless otherwise indicated.\par
In this section, we consider the problem of optimal probing where encoder takes
a
\textquoteleft costly\textquoteright \ action depending upon message and use
it to probe the channel and
observe or
not 
the channel state. The actions are binary, hence while action, $A=1$
corresponds to the case when encoder observes the channel state, action, $A=0$
implies no acquired state information. Note that such a kind of abstraction
taps in the motivation considered in Compressed Sensing framework in
\cite{Donoho}, where due to cost of sensing and measurement, you aim to observe
only a few noisy signal observations and construct the original signal
accurately. We further assume decoder knows 
the complete state information and that the encoder uses partial state
information
non-causally to generate channel input symbol. 
\subsection{Problem Setup}
The setting is depicted in Figure \ref{EncoderStateEstLogicD}: Message
$M$ is selected uniformly from a uniform distribution on the message set
$\mathcal{M}=\{1,2,\cdots,\card{\mathcal{M}}\}$. Nature generates states
sequence
$S^n\in \mathcal{S}^n$ i.i.d $\sim P_S$, independent of message.
A $(2^{nR},n)$ code consists of :
\begin{itemize}
\item \textit{Probing Logic} : $f_A:M\rightarrow A^n\in \{0,1\}^n$ such that the
action sequence $A^n$ satisfies the cost constraints
\bea\label{costconstraints}
\Lambda(A^n)=\frac{1}{n}\sum_{i=1}^{n}\Lambda(A_i)\leq \Gamma,
\eea
where $\Lambda(\cdot)$ is the cost function while $\Gamma$ is the cost
constraint. Given nature generated state sequence $S^n$ and
message dependent
action sequence $A^n$, encoder receives partial state information $S_e^n\in
\{\{\ast\}\cup\mathcal{S}\}^n$ through a deterministic channel characterized
by,
\bea
S_e=h(S,A)=S \mbox { if }A=1,\\
S_e=h(S,A)=\ast \mbox { if }A=0,
\eea
where $\ast$ stands for erasure or no information of state symbol. Thus, $A=1$
corresponds to an observation of the channel state while $A=0$ to a lack of an
observation. Without loss of generality we can assume, $\Gamma(0)=0$.
\item \textit{Encoding} : $f_e : (M,S^n_e)\rightarrow X^n\in\mathcal{X}^n$, i.e.
encoder uses the partial state information 
non-causally to generate channel input symbols.
\item \textit{Decoding} : $f_d : (Y^n,S^n)\rightarrow \hat{M}\in\mathcal
\{1,2,\cdots,\card{\mathcal{M}}\}$, 
where the channel output $Y^n\in\mathcal{Y}^n$. 
\end{itemize}
The joint PMF on $(M,A^n,S^n,S^n_e,X^n,Y^n,\hat{M})$ induced by a given scheme
is
\bea\label{mainjoint}
&&P_{M,A^n,S^n,S^n_e,X^n,Y^n,\hat{M}}(m,a^n,s^n,s^n_e,x^n,y^n,\hat{m}
)\nonumber\\
&=&\frac{\1_{\{f_A(m)=a^n,f_e(m,s_e^n)=x^n,f_d(y^n,s^n)=\hat{m}\}}}{\card{
\mathcal{M}}}\prod_{i=1}^{n}P_S(s_i)\1_{\{s_
{e,i}=h(s_i,a_i)\}}P_{Y|X,S}(y_i|x_i,s_i).
\eea
The probability of error is calculated as $P_e=P(M\neq \hat{M}(Y^n,S^n))$. The
rate
$R$ is said to be achievable if there exists a sequence of $(2^{nR},n)$ codes
for
increasing block lengths satisfying the cost constraints
(\ref{costconstraints}) with $\frac{1}{n}\log
\card{\mathcal{M}}\leq R$ and
$P^n_e\stackrel{n\rightarrow\infty}{\longrightarrow}0$

\begin{figure}[htbp]
\begin{center}
\scalebox{0.7}{\input{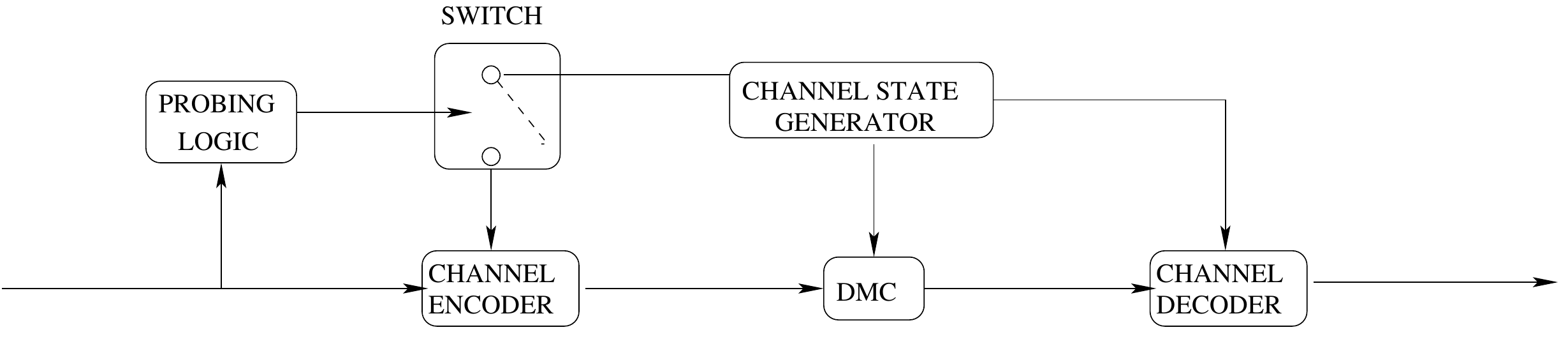_t}}
\caption{Encoder takes message dependent actions to observe state, encodes
using available partial state information non-causally
while decoder knows the complete channel state sequence.}
\label{EncoderStateEstLogicD}
\end{center}
\end{figure}
\subsection{Probing Capacity}
\begin{theorem}
\label{theorem1}
The cost constrained \textit{\textquoteleft probing capacity\textquoteright} of
the system in Fig.
\ref{EncoderStateEstLogicD} with 
channel inputs constructed using the observed state sequence non-causally while 
decoder has complete information of the state is given
by
\bea\label{cap1}
C(\Gamma)=\max[I(X;Y|S)],
\eea
where maximization is over all joint distributions of the form
\bea
P_{A,S,S_e,X,Y}=P_AP_S\1_{\{S_e=h(S,A)\}}P_{X|S_e,A}P_{Y|X,S},
\eea
for some $P_A,P_{X|S_e,A}$ such that $E[\Lambda(A)]\leq\Gamma$.
\end{theorem}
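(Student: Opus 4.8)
The plan is to prove a matching converse and achievability part. Two facts will do most of the work. On the converse side: in any code $X_i$ is a deterministic function of $(M,S^n)$, since $A^n=f_A(M)$, then $S^n_e$ is fixed coordinate-wise by $(S^n,A^n)$ through $h$, and $X^n=f_e(M,S^n_e)$. On the achievability side: for every law of the prescribed form $I(X;Y|S)=I(A;Y|S)+I(X;Y|S,A)$ (because $Y-(X,S)-A$ there), which suggests carrying up to $I(A;Y|S)$ bits on the action sequence and $I(X;Y|S,A)$ more on the channel inputs.

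For the converse I would begin with Fano: since $M\perp S^n$, $nR=H(M|S^n)\le I(M;Y^n|S^n)+n\epsilon_n$ with $\epsilon_n\to0$. Then, by the chain rule, the fact that $X_i$ is determined by the conditioning, and memorylessness of $P_{Y|X,S}$,
\bea
I(M;Y^n|S^n)&=&\sum_{i=1}^{n}\Big[H(Y_i|Y^{i-1},S^n)-H(Y_i|Y^{i-1},S^n,M,X_i)\Big]\nonumber\\
&\le&\sum_{i=1}^{n}\big[H(Y_i|S_i)-H(Y_i|X_i,S_i)\big]=\sum_{i=1}^{n}I(X_i;Y_i|S_i).\nonumber
\eea
I would then verify that each single-letter law $P_{A_i,S_i,S_{e,i},X_i,Y_i}$ is of the form in the theorem --- the nonobvious points being $A_i\perp S_i$ (the actions depend only on $M$, which is independent of $S^n$) and the Markov relation $X_i-(S_{e,i},A_i)-S_i$ --- and that the uniform mixture $\bar P=\frac1n\sum_i P_{A_i,S_i,S_{e,i},X_i,Y_i}$ is again of that form and has $\E_{\bar P}[\Lambda(A)]=\frac1n\sum_i\E[\Lambda(A_i)]\le\Gamma$. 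Finally, since $p_{X|S=s}\mapsto I(X;Y|S=s)$ is concave for each fixed $s$ and all these laws share the $S$-marginal $P_S$, concavity gives $\frac1n\sum_i I(X_i;Y_i|S_i)\le I_{\bar P}(X;Y|S)\le C(\Gamma)$, hence $R\le C(\Gamma)+\epsilon_n$.

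For achievability I would fix an admissible $(P_A,P_{X|S_e,A})$ with $\E[\Lambda(A)]<\Gamma$ (the equality case handled by a standard continuity argument), split $M=(M_A,M_X)$ with $R_A<I(A;Y|S)$ and $R_X<I(X;Y|S,A)$, and draw for each $m_A$ an action codeword $a^n(m_A)\sim\prod P_A$, keeping only those of empirical cost at most $\Gamma$; this gives a legitimate probing logic $f_A(m)=a^n(m_A)$. For each $m_A$ I would build a second layer: on the coordinates with $a_i=1$ both encoder and decoder know $S_i$, so a state-dependent codebook with inputs $\sim P_{X|S_e,A=1}$ delivers $I(X;Y|S,A=1)$ there, while on the coordinates with $a_i=0$ only the decoder knows $S_i$ and a codebook with inputs $\sim P_{X|S_e=\ast,A=0}$ delivers $I(X;Y|S,A=0)$; together these support $R_X<I(X;Y|S,A)$. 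The decoder, knowing the codebook and observing $(Y^n,S^n)$, would first recover $m_A$ as the unique action codeword jointly typical with $(S^n,Y^n)$ --- reliable because a wrong one passes with probability $\approx 2^{-nI(A;S,Y)}=2^{-nI(A;Y|S)}$ as $A\perp S$ --- and then decode $m_X$ on the two sub-blocks. Optimizing over $(P_A,P_{X|S_e,A})$ then yields all $R<C(\Gamma)$.

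The step I expect to be the main obstacle is the converse's Markov check $X_i-(S_{e,i},A_i)-S_i$: it is immediate when $A_i=1$ (then $S_{e,i}=S_i$ determines $S_i$), but when $A_i=0$ one must argue carefully that conditioning on the \emph{message-dependent} event $\{A_i=0\}$ does not destroy $M\perp S^n$, that $X_i$ then depends on $S^n$ only through the coordinates $\{S_j\}_{j\ne i}$, which are independent of $S_i$, and that this property, the factorization, and the cost bound all survive the mixture over $i$. On the achievability side the only conceptual subtlety is that the action layer must genuinely transport the $I(A;Y|S)$ bits --- a message-independent probing sequence would stall at $I(X;Y|S,A)$ --- which is exactly why the maximization in (\ref{cap1}) also ranges over $P_A$.
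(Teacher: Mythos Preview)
Your proposal is correct and follows essentially the same route as the paper: achievability by rate-splitting into an action layer carrying $I(A;Y|S)$ bits and a multiplexed input layer carrying $I(X;Y|S,A)$ bits (with the decoder first recovering the action codeword from $(Y^n,S^n)$ and then demultiplexing), and converse by Fano plus single-letterization to $\sum_i I(X_i;Y_i|S_i)$ together with the Markov check $X_i-(S_{e,i},A_i)-S_i$, which the paper also isolates as the key structural point and proves in an appendix. The only cosmetic difference is in the final concavity step of the converse: the paper bounds each term by $C(\E[\Lambda(A_i)])$ and then invokes concavity of $\Gamma\mapsto C(\Gamma)$, whereas you invoke concavity of $P_{X|S}\mapsto I(X;Y|S)$ directly and pass to the uniform mixture $\bar P=\frac1n\sum_i P_i$ (checking that $\bar P$ is again of the admissible form with $\E_{\bar P}[\Lambda(A)]\le\Gamma$); these are equivalent standard single-letterization maneuvers.
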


\begin{proof}
 \label{proof1}
\newline
 \textit{Achievability :} We use \textit{Rate-Splitting} and
\textit{Multiplexing} to achieve
capacity (for a similar scheme refer to \cite{GoldsmithVaraiya}). Note that in
this
problem while 
knowing $S_e$ we know $A$, hence we would
show 
achievability with $P_{X|S_e,A}$ replaced by $P_{X|S_e}$. Without loss of
generality
we assume $\mathcal{S}=\{1,2,\cdots,\card{\mathcal{S}}\}$, hence
$\mathcal{S}_e=\{\ast,1,2,\cdots,\card{\mathcal{S}}\}$.
 Fix $P_A,P_{X|S_e}$ which achieve $C(\frac{\Gamma}{1+\epsilon})$. We
\textit{split} 
message $M$ of rate $R$ into two messages
$M_1$ and $M_2$ of rate $R_1$ and $R_2$ respectively.
\begin{itemize}
 \item \textit{Generation of Codebooks : }
\begin{itemize}
 \item Generate codebook $\mathcal{C}_A$ of $\{A^n(m_1)\}_{m_1=1}^{2^{nR_1}}$
$n$-tuples
i.i.d. $\sim P_A$. To send message $M=(M_1,M_2)$, if $A^n(M_1)\in
T^n_\epsilon(A)$ ($T_\epsilon^n$ are
typical in the sense of \cite{CsiszarKorner}), then action $A^n(M_1)$ is
taken, else $A_0^n=(0,0,\cdots,0)$ is taken. If $A^n(M_1)\in
T^n_\epsilon(A)$, then by typical average lemma
$\cite{GamalKim}$, constraints are satisfied as,
\bea
\Lambda(A^n)=\frac{1}{n}\sum_{i=1}^{n}\Lambda(A_i)\leq
(1+\epsilon)E(\Lambda(A))=\Gamma.
\eea
\item For every $A^{n}(m_1)$, generate a codebook $\mathcal{C}_X(m_1)$ of 
$\{(X_e^n(m_1,m_2),X_1^n(m_1,m_2),\cdots,X_{\card{\mathcal{S}}}^n(m_1,m_2))\}_{
m_2=1}^{2^{nR_2}}$ \newline $(\card{\mathcal{S}}+1)n$-tuples
such that $X_e^n$, $X_1^n,\cdots, X_{\card{\mathcal{S}}}^n$ are i.i.d. $\sim
P_{X|S_e=\ast}, P_{X|S_e=1}, 
\cdots, P_{X|S_e=\card{\mathcal{S}}}$ respectively. Also generate a codebook
$\mathcal{C}^0_X$ of codewords $\{(X_{0,e}^n(m_2)\}_{
m_2=1}^{2^{nR_2}}$ i.i.d. $\sim
P_{X|S_e=\ast}$.
\end{itemize}
 \item \textit{Encoding :} 
\begin{itemize}
\item Given a message $M=(M_1,M_2)$, encoder decides to take
actions $A^n(M_1)$ or $A^n_0$ depending whether $A^n(M_1)$ is in
$T^n_\epsilon(A)$ or not. If $A^n(M_1)\in
T^n_\epsilon(A)$ encoder finds
$S_e^n=h(A^n(M_1),S^n)$, and then sends $X^n(M_1,M_2)$ using the 
following \textit{multiplexing}.
\bea
X_i&=&X_{e,i}(M_1,M_2) \mbox{ if } S_{e,i}=\ast,\\
X_i&=&X_{j,i}(M_1,M_2) \mbox{ if }
S_{e,i}=j\in\{1,2,\cdots,\card{\mathcal{S}}\}.\nonumber\\
\eea
If $A^n(M_1)\notin T^n_\epsilon(A)$, encoder sends $X^n_0(M_2)$.
\end{itemize}
 \item \textit{Decoding} : We perform \textit{Successive
Decoding} and \textit{Demultiplexing}. By successive decoding we mean that
actions are decoded first by decoder and then the actual codewords.
\begin{itemize}
 \item On obtaining the channel output sequence $Y^n$ and channel state sequence
$S^n$ decoder finds the
smallest value of $\hat{M}_1$ for which 
$(A^{n}(\hat{M}_1),Y^n,S^n)\in T_\epsilon^n(A,Y,S)$. If there is
no such $\hat{M_1}$, decoder assumes $\hat{M}_1=1$. 
 \item Once the decoder decodes the value of $M_1$, if $A^n(M_1)\in
T^n_\epsilon(A)$, it knows
$S^n_{e}=h(A^n(M_1),S^n)$ and hence,
using the codebook $C_X(M_1)$, it \textit{demultiplexes}
$\{(X_e^n(M_1,m_2),X_1^n(M_1,m_2),\cdots,X_{\card{\mathcal{S}}}^n(M_1,m_2))\}_{
m_2=1}^{2^{nR_2}}$
 to construct $X^n(M_1,m_2)_{m_2=1}^{2^{nR_2}}$ sequences as,
\bea
X_i(M_1,m_2)&=&X_{e,i}(M_1,m_2) \mbox{ if } S_{e,i}=\ast,\\
X_i(M_1,m_2)&=&X_{j,i}(M_1,m_2) \mbox{ if }
S_{e,i}=j\in\{1,2,\cdots,\card{\mathcal{S}}\}.
\eea
 \item After demultiplexing, if $A^n(M_1)\in
T^n_\epsilon(A)$, decoder finds the
smallest value of $\hat{M_2}$ for which 
$(X^{n}(M_1,\hat{M}_2),Y^n|S^n,A^{n}(M_1))\in T_\epsilon^n(X,Y|S^n,A^n(M_1)$. If
there is
no such $\hat{M_2}$, decoder assumes $\hat{M_2}=1$. If $A^n(M_1)\notin
T^n_\epsilon(A)$, decoder finds the
smallest value of $\hat{M_2}$ for which 
$(X_0^{n}(\hat{M}_2),Y^n|S^n)\in T_\epsilon^n(X,Y|S^n)$, else
$\hat{M_2}=1$ is assumed.
\end{itemize}
\item \textit{Analysis of Probability of Error :} Without loss of generality
we can assume $M=(M_1,M_2)=(1,1)$ was sent. We have the following error
events, 
\begin{itemize}
 \item $\mathcal{E}_{11}=\{A^n(1),Y^n,S^n\}\notin T_\epsilon^n(A,Y,S)$.
 \item $\mathcal{E}_{12}=\{A^n(\hat{m}_1),Y^n,S^n\}\in T_\epsilon^n(A,Y,S)$
for $\hat{m}_1\neq 1$.
 \item $\mathcal{E}_{21}=\{X^n(1,1),Y^n|S^n,A^{n}(1)\}\notin
T_\epsilon^n(X,Y|S^n,A^n(1))$.
 \item $\mathcal{E}_{22}=\{X^n(1,\hat{m}_2),Y^n|S^n,A^n(1)\}\in
T_\epsilon^n(X,Y|S^n,A^n(1))$
for $\hat{m}_2\neq 1$.
\end{itemize}
Let $\mathcal{E}_0=P((A^n(1),X^n(1,1))\in T^n_\epsilon(A,X))$. Hence,
\bea
P(\mathcal{E})&=&P(\mathcal{E}_0\cap\mathcal{E})+P(\mathcal{E}^c_0\cap\mathcal{E
} )\\
&\leq&P(\mathcal{E}_0\cap\mathcal{E})+P(\mathcal{E}^c_0).
\eea
Note that by LLN (\cite{GamalKim}), $P(\mathcal{E}^c_0)\rightarrow 0$ as
$n\rightarrow\infty$. \newline
We will now show that $P(\mathcal{E}_0\cap\mathcal{E})\rightarrow0$. 
Let $\mathcal{E}_1=\mathcal{E}_{11}\cup\mathcal{E}_{12}$ and
$\mathcal{E}_2=\mathcal{E}_{21}\cup\mathcal{E}_{22}$.
By Law of Large Numbers, (LLN, (\cite{GamalKim}),
$P(\mathcal{E}_0\cap\mathcal{E}_{11})\rightarrow0$. By Packing Lemma
(\cite{GamalKim}),
$P(\mathcal{E}_0\cap\mathcal{E}_{12})\rightarrow0$ if
$R_1<I(A;Y,S)=I(A;Y|S)$ which
implies by union bound 
$P(\mathcal{E}_0\cap\mathcal{E}_1)\leq P(\mathcal{E}_0\cap\mathcal{E}_{11}
)+P(\mathcal{E}_0\cap\mathcal{E}_{12})\rightarrow 0$. \newline
Similarly by LLN,
$P(\mathcal{E}_0\cap\mathcal{E}^c_{1}\cap\mathcal{E}_{21})\rightarrow0$ and by
Packing Lemma
$P(\mathcal{E}_0\cap\mathcal{E}^c_{1}\cap\mathcal{E}_{22}
)\rightarrow0$ if
$R_2<I(X;Y|S,A)$ which implies by the union bound
$P(\mathcal{E}_0\cap\mathcal{E}^c_{1}\cap\mathcal{E}_2)\leq
P(\mathcal{E}_0\cap\mathcal{E}^c_{1} \cap\mathcal{E}_{21})
+P(\mathcal{E}_0\cap\mathcal{E}^c_{1}\cap\mathcal{E}_{22})\rightarrow 0$.
Hence the total probability of error 
\bea
P(\mathcal{E}_0\cap\mathcal{E})=P(\mathcal{E}_0\cap(\mathcal{E}_1\cup\mathcal{E}
_2))\leq
P(\mathcal{E}_0\cap\mathcal{E}_1)+P(\mathcal{E}_0\cap\mathcal{E}^c_1\cap\mathcal
{E}_2)\rightarrow 0, 
\eea
if $R_1< I(A;Y|S)$ and $R_2< I(X;Y|S,A)$. Therefore we obtain for vanishing
probability of error that 
\bea
R&=&R_1+R_2\\
&<&I(A;Y|S)+I(X;Y|A,S)\\
&=&I(X,A;Y|S)\\
&=&H(Y|S)-H(Y|X,S,A)\\
&=&I(X;Y|S)=C(\frac{
\Gamma}{1+\epsilon}).
\eea
Proof of achievability is completed by taking $\epsilon\rightarrow 0$.
\end{itemize}
 \textit{Converse :} Suppose rate $R$ is achievable. Now consider a sequence of
$(2^{nR},n)$ codes for which we have 
$P^n_e\stackrel{n\rightarrow\infty}{\longrightarrow}0$. Consider
\bea
nR&=&H(M)\\
&\stackrel{(a)}{=}&H(M|S^n)\\
&=&I(M;Y^n|S^n)+H(M|Y^n,S^n).\label{mutual1}
\eea
By Fano's Inequality (\cite{CoverThomas})
\bea
\label{fano1}H(M|Y^n,S^n)\leq 1+P^n_eR\leq n\epsilon_n,
\eea
where $\epsilon_n\stackrel{n\rightarrow\infty}{\longrightarrow}0$. Now Consider
\bea
I(M;Y^n|S^n)&=&H(Y^n|S^n)-H(Y^n|M,S^n)\\
&\stackrel{(b)}{=}&\sum_{i=1}^{n}H(Y_i|S^n,Y^{i-1})\nonumber\\
&-&\sum_{i=1}^{n}
H(Y_i|Y^ { i-1 } ,
M,S^n,A^n,S^n_e,X^n)\nonumber\\\\
&\stackrel{(c)}{\leq}&\sum_{i=1}^{n}H(Y_i|S_i)-\sum_{i=1}^{n}H(Y_i|X_i,S_i)\\
&=&\sum_{i=1}^{n} I(X_i;Y_i|S_i)\\
&\stackrel{}{\leq}&\sum_{i=1}^{n} C(\Lambda(A_i))\\
&\stackrel{(d)}{\leq}& n C(\frac{1}{n}\sum_{i=1}^{n}\Lambda(A_i))\\
&=& n C(\Lambda(A^n))\\
&\stackrel{(e)}{\leq}&n C(\Gamma),\label{costcap1}
\eea
where 
\begin{itemize}
 \item (a) follows from the fact that message is independent of state sequence.
 \item (b) follows from the fact that $A^n=A^n(M)$, $S_e=h(S,A)$ and
$X^n=X^n(M,S_e^n)$.
 \item (c) follows from the fact that conditioning reduces entropy and from the
markov chain, $Y_i-(X_i,S_i)-(Y^{ i-1 },M,S^{n\backslash
i},A^n,S^n_e,X^{n\backslash i})$ which is due to the induced joint probability
distribution as in Eq. (\ref{mainjoint}).
 \item (d) follows from the fact that $C(\Gamma)$ is concave in $\Gamma$. This
is proved
as follows. Let $C(\Gamma_1)$ and $C(\Gamma_2)$ be respectively achieved at
joint $P^1_{A}P^1_{X|S_e,A}$ and $P^2_{A}P^2_{X|S_e,A}$. Let $P^1(\cdot)$ and
$P^2(\cdot)$ be the corresponding joint distributions. Since $C(\Gamma)$ is
nondecreasing in $\Gamma$, therefore we have
\bea
\E_{P^1}[\Lambda(A)]&=&\Gamma_1\\
\E_{P^2}[\Lambda(A)]&=&\Gamma_2.
\eea
Now consider a joint distribution $P^{\lambda}=\lambda P^1+(1-\lambda)P^2$.
Clearly
\bea
\E_{P^{\lambda}}[\Lambda(A)]=\lambda \Gamma_1+(1-\lambda)\Gamma_2.
\eea
Now observe that $I(X;Y|S)$ is concave in $P(Y|S)$ which is linear in
$P_{A}P_{X|S_e,A}$. Hence $I(X;Y|S)$ is concave in $P_{A}P_{X|S_e,A}$. Thus
denoting $R^{\lambda}$ as the value of $I(X;Y|S)$ at joint $P^{\lambda}$, we
have
\bea
\lambda C(\Gamma_1)+(1-\lambda)C(\Gamma_2)\leq R^\lambda\leq C(\lambda
\Gamma_1+(1-\lambda)\Gamma_2).\nonumber
\eea
 \item (e) follows from the fact that $C(\Gamma)$ is non decreasing in $\Gamma$,
which can be argued easily as larger 
$\Gamma$ implies a larger feasible region and hence larger capacity.
\end{itemize} 
We further note the following relations and Markov Chains :
\begin{itemize}
\item $A_i=A_i(M)$ is independent of $S_i$ as state sequence is independent of
message and actions are functions of message.
\item $X_i-(S_{e,i},A_i)-S_{i}$. Refer to Appendix \ref{markov1} for Proof. 
\item $Y_i-(X_i,S_i)-(A_i,S_{e,i})$ follows from the DMC assumption on the
channel which implies the induced joint probability distribution as in
Eq. (\ref{mainjoint}).
\end{itemize}
Hence by using Equations (\ref{mutual1}), (\ref{fano1}) and (\ref{costcap1}),
and
letting
$n\rightarrow\infty$ we have $R\leq C(\Gamma)$.
\end{proof}
\begin{note}[Causal Probing]
 Note that the capacity is the same if we now consider the setting where the
encoder
generates channel input sequences using observed state causally. It is easier to
see that converse
holds without change as in non-causal setting. Achievability remains same
because we are multiplexing
based only on current observed partial state
 information.
\end{note}
\begin{note}[Probing Independent of Messages]
 If action sequence is taken independent of message, \textit{time sharing} is
optimal. This is because when action sequence is
independent of message, the setting is equivalent
 to the case when decoder knows the action. The
capacity in this case is,
\bea
C(\Gamma)&=&\max[I(X;Y|S,A)]\\
&=&\max[p(A=0)I(X;Y|S,A=0)\nonumber\\ &+&p(A=1)I(X;Y|S,A=1)]\\
&=&p(A=0)C(0)+p(A=1)C(1).
\eea
\end{note}
\section{Equivalence between Encoder Probing and Channels with Action-Dependent
States}
\label{equivalence}
In the previous section we motivated the basic problem of characterizing the
capacity when observation of the channel state at the encoder comes at a price.
We had further assumed that the decoder knew the
complete state information. In this section, we point out the
equivalence of
general setting of action dependent channel probing at the encoder with the
setting of channels with action dependent states considered in
\cite{TsachyChannel}. In our generalized setting,
actions are taken in an alphabet $\mathcal{A}$ and
 encoder observes $S_e$ through a DMC $P_{S_e|S,A}$. The setting in
\cite{TsachyChannel} and \cite{Kittichokechai} is as follows. Given a message 
$M$, encoder takes actions $A^n=A^n(M)$, which affect the formation of channel
states. These
states are then used by the encoder causally or non-causally to generate channel
input.\newline
First consider the case when decoder does not
know the
channel states. Now in our setting we are given from
nature $P_S, P_{S_e|S,A},P_{Y|X,S}$, but this is equivalent to
$P_{S_e|A},P_{Y|X,S_e,A}$ since $S^n$ is not available at encoder or decoder
and 
hence can be averaged out. This establishes the
equivalence as depicted in Table I and Fig. \ref{CompareStateEst}. If the
decoder now knows
the channels state $S_d$ through DMC $P_{S_d|S,S_e,A}$ we can replace
$\tilde{Y}$ in Fig. \ref{CompareStateEst} with $(Y,S_d)$
to compute capacity.
\begin{table}
\begin{center}
\caption{Equivalence of setting in \cite{TsachyChannel} to our formulation of
optimal probing at encoder.}
\begin{tabular}{|p{4cm}|p{4cm}|}
\hline
Action Dependent State Channels (\cite{TsachyChannel}) & Optimal Encoder
 State Probing\\\hline
$\tilde{X}$ & $X$\\\hline
$\tilde{A}$ & $A$\\\hline
$\tilde{S}$ & $S_e$\\\hline
$\tilde{Y}$ & $(Y,S_d)$\\\hline
\end{tabular}
\end{center}
\end{table}
\begin{figure}[htbp]
\begin{center}
\scalebox{0.5}{\input{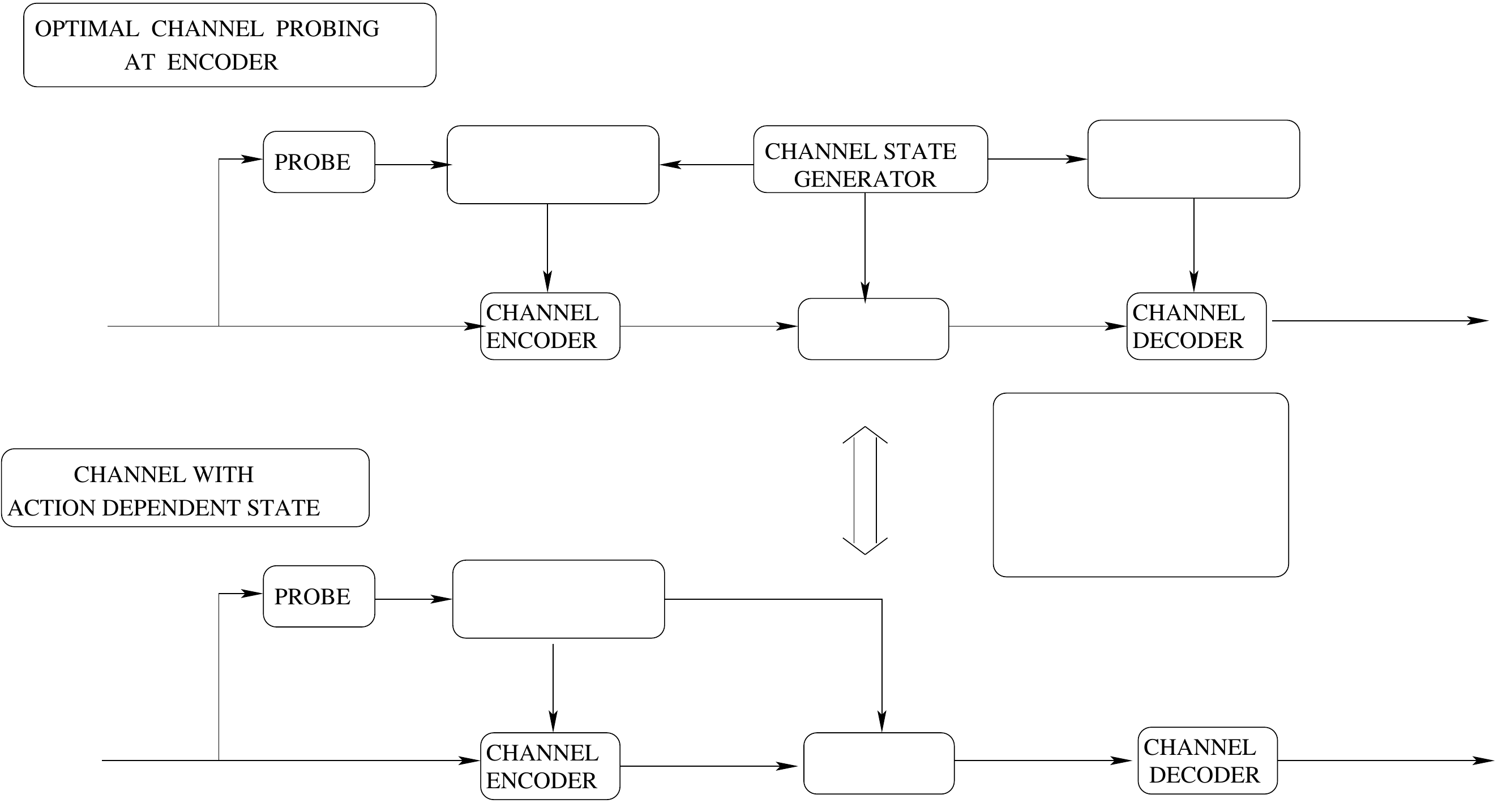_t}}
\caption{Equivalence of our setting of probing the channel state at the encoder
to that of channels with action
dependent states in \cite{TsachyChannel}.}
\label{CompareStateEst}
\end{center}
\end{figure}

Hence using the proven equivalence we invoke and list theorems from
\cite{TsachyChannel}
transformed for our setting.
\begin{theorem}[Equivalent to Theorem 1 in \cite{TsachyChannel}.]
\label{theorem2}
The \textit{\textquoteleft probing capacity\textquoteright}  for optimal channel
state
observation at the encoder which generates channel
inputs using partial state information \textit{non-causally} as in Fig.
\ref{CompareStateEst} with cost constraint $\Gamma$, is given
by,
\bea\label{cap3}
C_{nc}(\Gamma)&=&\max[I(U;Y,S_d)-I(U;S_e|A)]\\
&=&\max[I(A,U;Y,S_d)-I(U;S_e|A)],
\eea
where maximization is over all joint distributions of the form
\bea
P_{A,S,S_e,U,S_d,X,Y}&=&P_AP_SP_{S_e|S,A}P_{U|S_e,A}P_{S_d|S,S_e,A}\nonumber\\
&\times&\1_{\{X=f(U,S_e)\}}P_{Y|X,S},
\eea
for some $P_A,P_{U|S_e,A},f$ such that $E[\Lambda(A)]\leq\Gamma$ and
$\card{\mathcal{U}}\leq\card{\mathcal{A}}\card{\mathcal{S}}\card{\mathcal{S}_e}
\card { \mathcal { S } _d
} \card{\mathcal{X}}
+3$.
\end{theorem}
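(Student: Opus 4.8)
\emph{Proof approach.} This statement is the translation, through the equivalence of Table~I and Fig.~\ref{CompareStateEst}, of Theorem~1 of \cite{TsachyChannel}; so the plan is to (i) make the reduction to an action-dependent-state channel exact, (ii) invoke the cited theorem on the reduced channel, (iii) translate the resulting distribution and rate back to the present notation, (iv) reconcile the two displayed forms, and (v) record the cardinality bound. For (i), the key point is that $S$ is seen by neither terminal: the encoder acts on $M$, observes $(A,S_e)$, and forms $X$; the decoder observes $(Y,S_d)$. Hence $S$ can be summed out of (\ref{mainjoint}). Writing $\tilde A=A$, $\tilde S=S_e$, $\tilde X=X$, $\tilde Y=(Y,S_d)$, and using $P_{S|S_e,A}$ (well defined from $P_SP_{S_e|S,A}$), the reduced system is a memoryless action-dependent-state channel with state-formation kernel $P_{\tilde S|\tilde A}(s_e|a)=\sum_s P_S(s)P_{S_e|S,A}(s_e|s,a)$ and transition kernel $P_{\tilde Y|\tilde X,\tilde S,\tilde A}(y,s_d|x,s_e,a)=\sum_s P_{S|S_e,A}(s|s_e,a)P_{Y|X,S}(y|x,s)P_{S_d|S,S_e,A}(s_d|s,s_e,a)$, in which the decoder sees only $\tilde Y$. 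This reduction preserves the message, the per-symbol cost (since $\tilde A=A$), and the probability of error, so achievable rates are identical; the mild extra dependence of $P_{\tilde Y|\tilde X,\tilde S,\tilde A}$ on $\tilde A$ is harmless because $\tilde A=\tilde A(M)$ is a function of the message and is decoded en route.

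For (ii)--(iii): Theorem~1 of \cite{TsachyChannel}, applied to the reduced channel, gives non-causal capacity $\max[I(U;\tilde Y)-I(U;\tilde S|\tilde A)]$ over joint laws $P_{\tilde A}P_{\tilde S|\tilde A}P_{U|\tilde S,\tilde A}\1_{\{\tilde X=f(U,\tilde S)\}}P_{\tilde Y|\tilde X,\tilde S,\tilde A}$ with $\E[\Lambda(\tilde A)]\le\Gamma$ and $U$ over a bounded alphabet. Substituting $\tilde A=A$, $\tilde S=S_e$, $\tilde Y=(Y,S_d)$ and re-expanding the transition kernel through $S$ reproduces exactly the factorization displayed in the theorem (using $P_{S_e|A}P_{S|S_e,A}=P_SP_{S_e|S,A}$), turns $\tilde X=f(U,\tilde S)$ into $X=f(U,S_e)$, and yields the rate $\max[I(U;Y,S_d)-I(U;S_e|A)]$, i.e. the first line of (\ref{cap3}).

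For (iv): since $I(A,U;Y,S_d)-I(U;Y,S_d)=I(A;Y,S_d|U)\ge 0$, the second form always dominates the first. For the reverse, given any feasible $(A,U,\dots)$ set $U'=(U,A)$ and keep $A'=A$, now a deterministic function of $U'$; then $I(A;Y,S_d|U')=0$, so $I(U';Y,S_d)-I(U';S_e|A')=I(A,U;Y,S_d)-I(U;S_e|A)$, while $X=f(U,S_e)$ is still a function of $(U',S_e)$ and the cost is unchanged. Hence the two maxima coincide (and $A$ may be taken to be a deterministic function of $U$ without loss of optimality).

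For (v): the objective depends on $U$ only through the conditional law $P_{A,S,S_e,S_d,X|U}$, since $Y$ is a stochastic function of $(X,S)$, $X$ a function of $(U,S_e)$, and $S_d$ a stochastic function of $(S,S_e,A)$. A standard Fenchel--Eggleston--Carath\'eodory / convex-cover argument on the simplex of such conditional laws --- preserving the joint $P_{A,S,S_e,S_d,X}$ together with the two residual-entropy functionals that pin down $I(U;Y,S_d)$ and $I(U;S_e|A)$ --- then gives $\card{\mathcal U}\le\card{\mathcal A}\card{\mathcal S}\card{\mathcal S_e}\card{\mathcal S_d}\card{\mathcal X}+3$. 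The step I expect to need the most care is making (i) fully rigorous --- verifying that summing out $S^n$ leaves exactly the memoryless action-dependent-state channel to which \cite{TsachyChannel} applies (including the harmless $\tilde A$-dependence above) --- together with the cardinality bookkeeping; there is no conceptually hard obstacle here, the result being a corollary of the established equivalence.
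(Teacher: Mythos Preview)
Your approach is exactly the paper's: Theorem~\ref{theorem2} is stated as a direct corollary of the equivalence in Table~I and Fig.~\ref{CompareStateEst}, obtained by substituting $\tilde A=A$, $\tilde S=S_e$, $\tilde X=X$, $\tilde Y=(Y,S_d)$ into Theorem~1 of \cite{TsachyChannel}; your steps (i)--(iii) spell this reduction out, and (iv) is the standard absorption $U'=(U,A)$. The one place your bookkeeping diverges is (v): your ``two residual-entropy functionals'' on top of preserving $P_{A,S,S_e,S_d,X}$ would give $+1$, not $+3$; the paper's Note instead keeps a single functional $H(S_e|A,U)-H(Y,S_d|U)$ and adds three further constraints (independence of $S$ and $A$, and the Markov chains $U-(S_e,A)-(S,S_d)$ and $X-(U,S_e)-(A,S,S_d)$), which is how the $+3$ actually arises.
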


\begin{theorem}[Equivalent to Theorem 2 in \cite{TsachyChannel}.]
\label{theorem3}
The \textit{\textquoteleft probing capacity\textquoteright} for optimal channel
state
observation at the encoder which generates channel
inputs using partial state information \textit{causally} as in Fig.
\ref{CompareStateEst} with
cost constraint $\Gamma$ is given by,
\bea\label{cap2}
C_{c}(\Gamma)=\max[I(U;Y,S_d)],
\eea
where maximization is over all joint distributions of the form
\bea
P_{U,A,S,S_e,S_d,X,Y}&=&P_U\1_{\{A=g(U)\}}P_SP_{S_e|S,A}P_{S_d|S,S_e,A}
\nonumber\\
&\times& \1_{ \{X=f(U,S_e)\}}P_{Y|X,S},
\eea
for some $P_U,g,f$ such that $E[\Lambda(A)]\leq\Gamma$ and
$\card{\mathcal{U}}\leq\min\{\card{\mathcal{Y}}\card{\mathcal{S}_d},\card{
\mathcal{A}}\card{\mathcal{S}}\card{\mathcal{S}_e}\card{\mathcal{S}_d}\card{
\mathcal{X}}+3\}$
\end{theorem}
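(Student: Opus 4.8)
We only sketch the argument, since no fresh coding theorem is needed: the plan is to run Theorem \ref{theorem3} through the equivalence set up in this section and then quote Theorem 2 of \cite{TsachyChannel}. Starting from the data $(P_S,P_{S_e|S,A},P_{S_d|S,S_e,A},P_{Y|X,S})$ of our problem, define the induced action-dependent state channel and the induced forward channel
\bea
P_{\tilde{S}|\tilde{A}}(s_e|a)&=&\sum_{s}P_S(s)\,P_{S_e|S,A}(s_e|s,a),\\
P_{\tilde{Y}|\tilde{X},\tilde{S},\tilde{A}}(y,s_d|x,s_e,a)&=&\sum_{s}P_{S|S_e,A}(s|s_e,a)\,P_{Y|X,S}(y|x,s)\,P_{S_d|S,S_e,A}(s_d|s,s_e,a),
\eea
where $P_{S|S_e,A}$ is the posterior determined by $P_S$ and $P_{S_e|S,A}$, and we use the identifications $\tilde{X}=X$, $\tilde{A}=A$, $\tilde{S}=S_e$, $\tilde{Y}=(Y,S_d)$ of Table I.

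Because $S^n$ is observed by neither terminal in Fig. \ref{CompareStateEst}, it can be averaged out exactly as above, so the reduction is lossless. I would verify the bijection at the level of codes: every $(2^{nR},n)$ code for our causal-probing problem is, verbatim, a $(2^{nR},n)$ code for the action-dependent-state channel $(P_{\tilde{S}|\tilde{A}},P_{\tilde{Y}|\tilde{X},\tilde{S},\tilde{A}})$ with the state used \emph{causally} at the encoder, inducing the same joint law on the relevant variables $(M,\tilde{A}^n,\tilde{S}^n,\tilde{X}^n,\tilde{Y}^n,\hat{M})$, the same probability of error, and the same empirical action cost $\frac{1}{n}\sum_i\Lambda(A_i)$; and the inverse map does the same. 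Hence $C_c(\Gamma)$ equals the causal action-dependent-state capacity of the induced channel at cost $\Gamma$.

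Now I would invoke Theorem 2 of \cite{TsachyChannel}, which asserts that this capacity equals $\max I(\tilde{U};\tilde{Y})$ over joint laws $P_{\tilde{U}}\1_{\{\tilde{A}=g(\tilde{U})\}}P_{\tilde{S}|\tilde{A}}\1_{\{\tilde{X}=f(\tilde{U},\tilde{S})\}}P_{\tilde{Y}|\tilde{X},\tilde{S},\tilde{A}}$ with $\E[\Lambda(\tilde{A})]\leq\Gamma$. Reinserting the identifications of Table I and expanding $P_{\tilde{S}|\tilde{A}}$ and $P_{\tilde{Y}|\tilde{X},\tilde{S},\tilde{A}}$ back into $P_SP_{S_e|S,A}P_{S_d|S,S_e,A}P_{Y|X,S}$ converts this into $\max I(U;Y,S_d)$ over exactly the family of distributions displayed in the statement, which is equation (\ref{cap2}).

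The only step that is more than pure translation is the cardinality bound. The term $\card{\mathcal{A}}\card{\mathcal{S}}\card{\mathcal{S}_e}\card{\mathcal{S}_d}\card{\mathcal{X}}+3$ is the transported support-lemma bound of \cite{TsachyChannel}: one preserves the joint type of $(A,S,S_e,S_d,X)$ together with $H(Y,S_d|U)$ and the cost, and I would check that the $S$-averaging used to form the induced channel adds no further constraints. The term $\card{\mathcal{Y}}\card{\mathcal{S}_d}$ is the usual improvement available in the causal case: an optimal $\tilde{U}$ may be taken to be a deterministic Shannon-type strategy (a fixed action $g$ together with a deterministic map $f:\mathcal{S}_e\to\mathcal{X}$), and a Carath\'{e}odory argument on the decoder's output simplex --- whose dimension is governed by the combined observation alphabet $\mathcal{Y}\times\mathcal{S}_d$ --- then caps the number of strategies that need positive probability. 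Taking the minimum gives the stated bound. I expect this last item --- justifying the $\card{\mathcal{Y}}\card{\mathcal{S}_d}$ term cleanly while accounting for the action cost --- to be the only genuine work; everything else is bookkeeping layered on top of the equivalence and \cite{TsachyChannel}.
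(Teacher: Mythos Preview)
Your approach is correct and mirrors the paper's own treatment: the paper does not give a standalone proof of Theorem \ref{theorem3} but rather establishes the equivalence of Table I/Fig.~\ref{CompareStateEst} and then simply invokes Theorem 2 of \cite{TsachyChannel} under the substitutions $\tilde{X}=X$, $\tilde{A}=A$, $\tilde{S}=S_e$, $\tilde{Y}=(Y,S_d)$, exactly as you propose. The only place the paper adds detail is in the Note following the theorem, where the support-lemma count is itemized slightly differently than yours (beyond $\card{\mathcal{A}}\card{\mathcal{S}}\card{\mathcal{S}_e}\card{\mathcal{S}_d}\card{\mathcal{X}}-1$ elements for $P_{A,S,S_e,S_d,X}$, it reserves one element for $H(Y,S_d|U)$, one for the independence of $S$ and $(A,U)$, and two for the Markov chains $(S_e,S_d)-(S,A)-U$ and $X-(U,S_e)-(A,S,S_d)$); your ``joint type plus $H(Y,S_d|U)$ plus cost'' accounting is a minor variant of the same support-lemma argument, and your Shannon-strategy/Carath\'{e}odory justification for the $\card{\mathcal{Y}}\card{\mathcal{S}_d}$ bound is precisely what the paper means by ``following the arguments in \cite{TsachyChannel}.''
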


\begin{note}
 Note that auxiliary variable $U$ has an increased cardinality as compared to
equivalent setting in \cite{TsachyChannel}. This stems from the following,
\begin{itemize}
 \item Output $Y$ is replaced with $(Y,S_d)$, hence in causal setting we have
$\card{\mathcal{U}}\leq \card{\mathcal{Y}}\card{\mathcal{S}_d}$ following the
arguments in \cite{TsachyChannel}.
\item To preserve $P_{A,S,S_e,S_d,X}$, in both causal and non-causal setting we
have
$\card{\mathcal{U}}\leq\card{\mathcal{A}}\card{\mathcal{S}}\card{\mathcal{S}_e}
\card{\mathcal{S}_d
} \card{\mathcal{X}}
-1$. In causal setting, four more elements are needed, one to preserve
$H(Y,S_d|U)$, one to preserve independence of $S$ with $(A,U)$ and two more
each to preserve markov chains $(S_e,S_d)-(S,A)-U$ and $X-(U,S_e)-(A,S,S_d)$.
In non causal setting, four more elements are needed, one to preserve
$H(S_e|A,U)-H(Y,S_d|U)$, one to preserve independence of $S$ with $A$ and two
more to preserve markov chains, $U-(S_e,A)-(S,S_d)$ and $X-(U,S_e)-(A,S,S_d)$.
\end{itemize}

\end{note}

\par
\textit{Deriving Theorem \ref{theorem1} using Theorems \ref{theorem2} and
\ref{theorem3}}
We would like to derive the capacity results in Theorem \ref{theorem1}
from Theorems  \ref{theorem2} and \ref{theorem3}. We have already pointed out
that capacity of the setting  in Fig. \ref{EncoderStateEstLogicD} is the same
whether encoder encodes using partial information causally or non-causally (call
it $C(\Gamma)=C_c(\Gamma)=C_{nc}(\Gamma)$). (Subscripts \textquoteleft
c\textquoteright and \textquoteleft nc\textquoteright \ stand for capacity for
causal and non-causal encoding of partial state information). We claim to
prove the
result
$C(\Gamma)=C_c(\Gamma)=C_{nc}(\Gamma)$ using Theorems \ref{theorem2}
and \ref{theorem3}.\newline
For non-causal encoding (using Theorem \ref{theorem2})
\bea
C_{nc}(\Gamma)&=&\max[I(A,U;Y,S)-I(U;S_e|A)]\\
&=&\max[I(A,U;Y|S)+I(A,U;S)\nonumber\\
&& -I(U;S_e|A)]\\
&\stackrel{(a)}{=}&\max[H(Y|S)-H(Y|S,A,U,S_e,X)\nonumber\\
&& + I(U;S|A)-I(U;S_e|A)]
\\
&=&\max[H(Y|S)-H(Y|S,A,U,S_e,X)\nonumber\\
&& - H(U|S,A,S_e)+H(U|S_e,A)]\\
&\stackrel{(b)}{=}&\max[H(Y|S)-H(Y|S,X)-H(U|A,S_e)\nonumber\\
&& + H(U|S_e,A)]\\
&=&I(X;Y|S),\label{eq1}
\eea
where 
\begin{itemize}
\item (a) follows from the fact that $S_e=h(S,A)$ and $X=f(U,S_e)$ and
that
$A$ is independent of $S$.
\item (b) follows from the DMC ($P_{Y|X,S}$) assumption and that $U-(S_e,A)-S$
is a Markov Chain.
\end{itemize}

This maximization is over joint distribution
\bea
\nonumber
&&P_{A,S,S_e,U,X,Y}\nonumber\\
&=&P_AP_SP_{S_e|S,A}P_{U|S_e,A}
\nonumber\\
&\times&\1_{\{X=f(U,S_e)\}}P_{Y|X,S}\\
&\stackrel{(c)}{=}&P_A(a)P_S(s)P_{S_e|S,A}(s_e|s,
a)\nonumber\\
&&P_{U|S_e}(u|s_e)\1_{\{x=f(u,s_e)\}}P_{Y|X,S}(y|x,
s)\nonumber\\\\
&=&P_AP_SP_{S_e|S,A}P_{X|S_e}P_{Y|X,S},\nonumber\\
\label{eq2}
\eea
where (c) follows from the fact that knowing $S_e$ implies knowing $A$. Hence we
have from Equations (\ref{eq1}) and (\ref{eq2}).
$C_{nc}(\Gamma)=C(\Gamma)$. \newline
Now for causal encoding (using Theorem \ref{theorem3})
\bea
C_{c}(\Gamma)&=&\max[I(U;Y,S)]\\
&\stackrel{(d)}{=}&\max[I(U;Y|S)]\\
&\stackrel{(e)}{=}&\max[I(A,U;Y|S)]\\
&\stackrel{}{=}&\max[H(Y|S)-H(Y|S,A,U,S_e,X)]\\
&\stackrel{}{=}&\max[H(Y|S)-H(Y|S,X)]\\
&=&I(X;Y|S),\label{eq4}
\eea
where (d) follows from the fact that $U$ and $S$ are independent and (e) follows
from the relation $A=g(U)$.
This maximization is over joint distribution
\bea
P_{U,A,S,S_e,X,Y}&=&P_U\1_{\{A=g(U)\}}P_SP_{S_e|S,A}\nonumber\\&\times&\1_{\{
X=f(U,S_e)\}}P_{Y|X,S}.\label{eq3}
\eea
We will now show that joint distribution of the form in Theorem \ref{theorem1}
is contained in (\ref{eq3}). So the joint distribution in Theorem \ref{theorem1}
\bea
P_{A,S,S_e,X,Y}&=&P_AP_SP_{S_e|S,A}P_{X|S_e,A}P_{Y|X,S}\\
&\stackrel{(f)}{=}&P_SP_{S_e|S,A}P_QP_A\1_{\{X=F(S_e,A,Q)\}}P_{Y|X,S}
\nonumber\\\\
&\stackrel{(g)}{=}&P_U\1_{\{A=g(U)\}}P_SP_{S_e|S,A}\nonumber\\
&\times&\1_{\{X=F(U,S_e)\}}P_{Y|X,S},\label{eq5}
\eea
where (f) follows from the Functional Representation Lemma (\cite{GamalKim}),
$Q$ is independent of $S_e,A$ and (g) follows from defining $U=(A,Q)$. Hence by
Equations (\ref{eq4}) and (\ref{eq5}) we have shown that $C_c(\Gamma)\geq
C(\Gamma)$.
But $C_c(\Gamma)\leq C_{nc}(\Gamma)=C(\Gamma)$. This
completes the claim.

\section{Optimal Probing at Both Encoder and Decoder}
\label{causal}
In earlier sections we considered the framework where only encoder was allowed
to take actions. In this section 
we further generalize the setting where decoder can also take actions based on
 the channel output and then obtain its own 
 partial state information which is used to construct estimate of the
transmitted message. We motivate this general setting in the framework
of communication over slow fading Channels. 

\begin{figure}[htbp]
\begin{center}
\scalebox{0.7}{\input{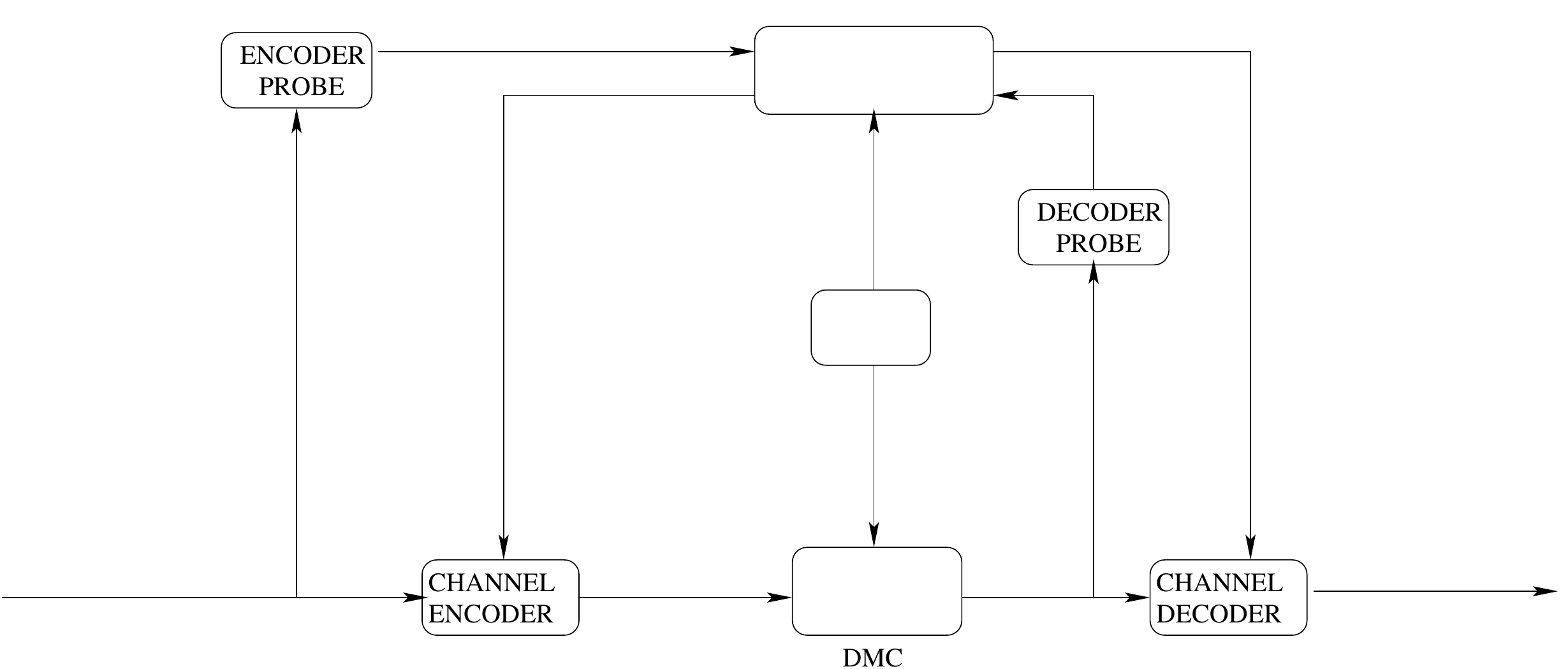_t}}
\caption{Encoder and decoder both take actions to observe partial state
information and use it for encoding and decoding.}
\label{ProbingAction}
\end{center}
\end{figure}
Consider a point to
point communication system where in each time epoch channel state is i.i.d.
$\sim P_S(s_i),$ $s_i\in\mathcal{S}$. In the next epoch
the information of this
present 
state is lost, hence encoder and decoder have to exploit whatever information is
available to them causally to get the best achievable rate. More precisely
consider the setup as depicted in Fig. \ref{ProbingAction} : Message
$M$ is selected uniformly from a uniform distribution on the message set
$\mathcal{M}=\{1,2,\cdots,\card{\mathcal{M}}\}$. Nature generates states
sequence
$S^n\in \mathcal{S}^n$ i.i.d $\sim P_S$, independent of message.
A $(2^{nR},n)$ code consists of :
\begin{itemize}
\item \textit{Probing Logic} : 
\begin{itemize}
 \item \textit{Encoder Probing Logic} $f_{A_{e,i}}:M\rightarrow A_{e,i}\in
\mathcal{A}_e$
 \item \textit{Decoder Probing Logic} $f_{A_{d,i}}:Y^{i-1}\rightarrow A_{d,i}\in
\mathcal{A}_d$, where channel output $Y\in\mathcal{Y}$.
\end{itemize}
Further the encoder and decoder actions are cost constrained, 
\bea
\Lambda(A_e^n,A_d^n)=\frac{1}{n}\sum_{i=1}^{n}\Lambda(A_{e,i},A_{d,i})\leq
\Gamma,
\eea
where $\Lambda(\cdot,\cdot)$ is the cost function while $\Gamma$ is the cost
constraint. Given nature generated state sequence $S^n$, message dependent
encoder action sequence $A_e^n$ and channel output dependent decoder action
sequence $A_d^n$, encoder acquires partial state information $S_e^n\in
\mathcal{S}_e^n$ (which we will call CSIT, i.e. Channel State Information at
Transmitter) and decoder $S_d^n\in
\mathcal{S}_d^n$ (which we will call CSIR, i.e. Channel State Information at
Receiver), through a DMC $P_{S_e,S_d|S,A_e,A_d}$.
\item \textit{Encoding} : $f_{e,i} : (M,S^i_e)\rightarrow X_i\in\mathcal{X}$.
\item \textit{Decoding} : $f_{d} : (Y^n,S_d^n)\rightarrow \hat{M}\in\mathcal
\{1,2,\cdots,\card{\mathcal{M}}\}$. 
\end{itemize}
The joint PMF on $(M,A_e^n,A_d^n,S^n,S^n_e,S_d^n,X^n,Y^n,\hat{M})$ induced by
a given scheme
is
\bea
&&P_{M,A^n,S^n,S^n_e,S^n_d,X^n,Y^n,\hat{M}}(m,a^n,s^n,s^n_e,s_d^n,x^n,y^n,\hat{m
}
)\nonumber\\
&=&\frac{1}{\card{\mathcal{M}}}\prod_{i=1}^{n}
\1_{\{a_{d,i}=f_{A_{d,i}}(y^{i-1}\}}
\1_{\{a_{e,i}=f_{A_{e,i}}(m)\}}P_S(s_i)P_{S_e,S_d|S,A_e,A_d}(s_{e,i},s_{d,i}|s,
a_ {e,i},a_{d,i})\\
&&\times\prod_{i=1}^{n}\1_{\{x_i=f_{e,i}(m,s_{e}^i)\}}P_{Y|X,S}(y_i|x_i,
s_i)\times\1_{\{\hat{m}=f_d(y^n,s_d^n)\}}.
\eea
\subsubsection{Probing Capacity}
\begin{theorem}
\label{theorem4}
The cost constrained \textit{\textquoteleft probing capacity\textquoteright}
for the scenario depicted in Fig.
\ref{ProbingAction} is given by
\bea\label{cap3}
C(\Gamma)=\max[I(U;Y,S_d|A_d)],
\eea
where maximization is over all joint distributions of the form
\bea
&&P_{S,A_d,U,A_e,S_e,X,Y,S_d}(s,a_d,u,a_e,s_e,x,y,s_d)\nonumber\\
&=&P_S(s)P_{A_d}(a_d)P_{U|A_d}(u|a_d)\1_{\{a_e=g(u,a_d)\}}P_{S_e,S_d|S,A_e,A_d}
(s_e,s_d|s ,
a_e,a_d)\1_{\{x=f(u,s_e,a_d)\}}P_{Y|X,S},
\eea
for some $P_{A_d},P_{U|A_d},g,f$ such that $\E[\Lambda(A_e,A_d)]\leq\Gamma$ and
$\card{\mathcal{U}}\leq\min\{\card{\mathcal{Y}}\card{\mathcal{S}_d}\card{
\mathcal{A}_d} ,  \card{
\mathcal{S}}\card{
\mathcal{A}_d}\card {
\mathcal{A}_e}\card{\mathcal{S}_e}\card{\mathcal{S}_d}\card{\mathcal{X}}+4\}$
\end{theorem}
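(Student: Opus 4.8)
\emph{Overview.} Since the encoder uses its probed states only causally, the natural coding scheme is a Shannon-strategy random code in which the auxiliary $U$ plays the role of the time-$i$ strategy; the twist is that the decoder's probing actions must be implemented without knowledge of $U$. The plan is therefore to prove achievability by realizing the decoder's action sequence as a single fixed $P_{A_d}$-typical sequence $a_d^n$ known to both terminals (a legitimate, constant, choice of $f_{A_{d,i}}$), and to prove the converse by Fano plus a chain-rule expansion with a carefully enlarged auxiliary. The cardinality bound will come from the usual support-lemma bookkeeping.

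\emph{Achievability.} Fix $P_{A_d},P_{U|A_d},g,f$ evaluating the cost at $\Gamma/(1+\epsilon)$, fix $a_d^n\in T^n_\epsilon(A_d)$, and set $A_{d,i}\equiv a_{d,i}$. Generate codewords $U^n(m)$, $m\in\{1,\dots,2^{nR}\}$, with $U_i(m)\sim P_{U|A_d=a_{d,i}}$ independently. On message $m$ the encoder sets $A_{e,i}=g(U_i(m),a_{d,i})$, observes $S_{e,i}$, and sends $X_i=f(U_i(m),S_{e,i},a_{d,i})$; this is a valid causal encoder. The decoder, knowing $a_d^n$, outputs the unique $\hat m$ with $(U^n(\hat m),Y^n,S_d^n)$ jointly typical relative to $a_d^n$. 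Since $(U^n(m),a_d^n)$ is jointly typical with high probability (conditional typicality lemma, \cite{GamalKim}), $(A_e^n,A_d^n)$ is a deterministic function of a typical pair, so by the typical average lemma $\frac1n\sum_i\Lambda(A_{e,i},A_{d,i})\le(1+\epsilon)\E[\Lambda(A_e,A_d)]\le\Gamma$; the residual atypicality event is absorbed into the error analysis just like $\mathcal{E}_0$ in the proof of Theorem~\ref{theorem1}. A packing-lemma argument for the memoryless ``channel'' $U\mapsto(Y,S_d)$ whose time-$i$ law is $P_{Y,S_d|U,A_d=a_{d,i}}$ then gives vanishing error whenever $R<\frac1n\sum_i I(U;Y,S_d|A_d=a_{d,i})\to I(U;Y,S_d|A_d)$; letting $\epsilon\to0$ finishes.

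\emph{Converse.} For an achievable rate, write $nR=H(M)=I(M;Y^n,S_d^n)+H(M|Y^n,S_d^n)$ and bound the last term by $n\epsilon_n$ via Fano (as $\hat M=f_d(Y^n,S_d^n)$). Chain-rule $I(M;Y^n,S_d^n)=\sum_i I(M;Y_i,S_{d,i}|Y^{i-1},S_d^{i-1})$. The key point is that $A_d^i$ is a deterministic function of $Y^{i-1}$, so $A_{d,i}$ may be appended to the conditioning at no cost, and everything else may be absorbed into the auxiliary
\[
U_i:=(M,\,S_e^{i-1},\,Y^{i-1},\,S_d^{i-1}),
\]
yielding $I(M;Y_i,S_{d,i}|Y^{i-1},S_d^{i-1})\le I(U_i;Y_i,S_{d,i}|A_{d,i})$. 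One then checks that at each $i$ the law of $(S_i,A_{d,i},U_i,A_{e,i},S_{e,i},X_i,Y_i,S_{d,i})$ has exactly the claimed product form: $S_i\perp(A_{d,i},U_i)$ because $S^n$ is i.i.d.\ and independent of $M$ and of all time-$(<i)$ randomness; $A_{e,i}=f_{A_{e,i}}(M)$ and $A_{d,i}=f_{A_{d,i}}(Y^{i-1})$ are deterministic functions of $U_i$; $(S_{e,i},S_{d,i})$ is generated by fresh noise given $(S_i,A_{e,i},A_{d,i})$, giving $(S_{e,i},S_{d,i})-(S_i,A_{e,i},A_{d,i})-U_i$; $X_i=f_{e,i}(M,S_e^{i-1},S_{e,i})$ is a deterministic function of $(U_i,S_{e,i},A_{d,i})$; and $Y_i-(X_i,S_i)-(U_i,A_{e,i},A_{d,i},S_{e,i},S_{d,i})$ by the DMC assumption. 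Introducing a uniform independent time-sharing variable $Q$ and setting $U=(Q,U_Q)$, $A_d=A_{d,Q}$, etc., preserves all these constraints, gives $\E[\Lambda(A_e,A_d)]=\frac1n\sum_i\E[\Lambda(A_{e,i},A_{d,i})]\le\Gamma$, and $\frac1n\sum_i I(U_i;Y_i,S_{d,i}|A_{d,i})\le I(U;Y,S_d|A_d)$. Hence $R\le\epsilon_n+C(\Gamma)$, and $n\to\infty$ completes the proof. The bound $\card{\mathcal{U}}\le\min\{\card{\mathcal{Y}}\card{\mathcal{S}_d}\card{\mathcal{A}_d},\ \card{\mathcal{S}}\card{\mathcal{A}_d}\card{\mathcal{A}_e}\card{\mathcal{S}_e}\card{\mathcal{S}_d}\card{\mathcal{X}}+4\}$ then follows from the support lemma applied for each value of $A_d$, preserving $P_{S,A_e,S_e,X,S_d|A_d}$ together with $H(Y,S_d|U,A_d)$ and the functional/Markov constraints, exactly in the style catalogued in the Note after Theorem~\ref{theorem3}; the first term of the minimum is the alternative obtained by bounding $U$ directly through $(Y,S_d,A_d)$.

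\emph{Main obstacle.} The crux is the converse: choosing the auxiliary so that it is simultaneously rich enough to make $X_i$ a function of $(U_i,S_{e,i},A_{d,i})$ (forcing $S_e^{i-1}\in U_i$), to make the decoder's action $A_{d,i}$ a function of $U_i$ (forcing $Y^{i-1}\in U_i$), and to telescope the $I(M;Y^n,S_d^n)$ expansion into per-letter terms conditioned on $A_{d,i}$ (forcing $S_d^{i-1}\in U_i$) — while still retaining the independence $S_i\perp(A_{d,i},U_i)$ on which the whole single-letterization rests. After that the only delicate step is the per-$A_d$ cardinality bookkeeping for the support lemma.
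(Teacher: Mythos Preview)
Your proof is correct and follows essentially the same route as the paper: the achievability fixes a $P_{A_d}$-typical sequence $a_d^n$ shared by both terminals and then runs a Shannon-strategy code (the paper phrases this as invoking the earlier causal-encoder theorem conditioned on $A_d$), and the converse is Fano plus the same chain-rule single-letterization with the paper's auxiliary $U_i=(M,Y^{i-1},S_d^{i-1},S_e^{i-1},A_d^{i-1},A_e^n)$, which differs from yours only by the redundant components $A_d^{i-1}=A_d^{i-1}(Y^{i-2})$ and $A_e^n=A_e^n(M)$. The one organizational difference is that the paper bounds $\sum_i I(U_i;Y_i,S_{d,i}|A_{d,i})\le nC(\Gamma)$ by first proving concavity of $C(\cdot)$ in a separate appendix (via a concavification variable $Q$), whereas you absorb the same $Q$ directly into $U$ inside the converse; the two arguments are equivalent.
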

\begin{proof}
\newline
\textit{Achievability} : Fix $P_{A_d},P_{U|A_d},g,f$ which achieve
$C(\frac{\Gamma}{1+\epsilon})$. Encoder and decoder decide on a sequence
$A_d^n$, i.i.d $\sim P_{A_d}$. By similar arguments as in achievability of
previous theorems using typical average lemma, constraints are satisfied. Now
using Theorem \ref{theorem2} if $A_{d,i}=a$ $\forall i$, error free
communication is achieved if $R<I(U;Y,S_d|A_d=a)$. Hence since encoder and
decoder both know $A_d^n$, we achieve $R<I(U;Y,S_d|A_d)$.
\newline
\textit{Converse} :
Suppose rate $R$ is achievable. Now consider a sequence of $(2^{nR},n)$ codes
for which we have $P_e^n\stackrel{n\rightarrow\infty}{\longrightarrow}$.
Consider
\bea
nR&=&H(M)\\
&=&I(M;Y^n,S^n_d)+H(M|Y^n,S_d^n).\label{mutual2}
\eea
By Fano's Inequality (\cite{CoverThomas})
\bea
\label{fano2}H(M|Y^n,S_d^n)\leq 1+P^n_eR\leq n\epsilon_n,
\eea
where $\epsilon_n\stackrel{n\rightarrow\infty}{\longrightarrow}0$. Now Consider
\bea
I(M;Y^n,S_d^n)&=&H(Y^n,S_d^n)-H(Y^n,S_d^n|M)\\
\label{conveq1}&\stackrel{(a)}{=}&\sum_{i=1}^{n}H(Y_i,S_{d,i}|Y^{i-1},S_d^{i-1},
A_d^i)-\sum_{i=1}
^{n}H(Y_i,S_{d,i}|Y^{i-1},
S^{i-1}_d,M,A_d^i,A_e^n)\\
&\stackrel{}{\leq}&\sum_{i=1}^{n}H(Y_i,S_{d,i}|A_{d,i})-\sum_{i=1}^{n}H(Y_i,S_{d
,
i}|Y^{i-1},
S^{i-1}_d,S_e^{i-1},M,A_d^i,,A_e^n)\\
&\stackrel{(b)}{\leq}&\sum_{i=1}^{n}H(Y_i,S_{d,i}|A_{d,i})-\sum_{i=1}^{n}H(Y_i,
S_{d,
i}|U_i,A_{d,i})\\
&=&\sum_{i=1}^{n} I(U_i;Y_i,S_{d,i}|A_{d,i})\\
&\stackrel{}{\leq}&\sum_{i=1}^{n}C(\E[\Lambda(A_{e,i},A_{d,i})])\\
&\stackrel{(c)}{\leq}&nC(E[\Lambda(A_e^n,A_d^n)])\\
&\stackrel{(d)}{\leq}&n C(\Gamma),\label{costcap2}
\eea
\begin{itemize}
 \item (a) follows from the fact that $A_{d,i}=A_{d,i}(Y^{i-1})$ and
$A_e^n=A_e^n(M)$.
 \item (b) follows by defining
$U_i=(M,Y^{i-1},S_d^{i-1},S_e^{i-1},A_d^{i-1},A_e^n)$.
\item (c) follows from the fact that $C(\Gamma)$ is concave in $\Gamma$. This
is proved in Appendix \ref{concavification}.
 \item (d) follows from the fact that $C(\Gamma)$ is non decreasing in $\Gamma$,
which can be argued easily as larger 
$\Gamma$ implies a larger feasible region and hence larger capacity.
\end{itemize}
We note the following relations,
\begin{itemize}
\item $A_{d,i}=A_{d,i}(Y^{i-1})$ is independent of $S_i$, it follows from proof
of markov chain MC1 in Appendix \ref{markov1}.
\item We have the Markov Chains, 
\begin{itemize}
\item $U_i-A_{d,i}-S_i$.
\item $A_{e,i}-(U_i,A_{d,i})-S_i$.
\item $(S_{e,i},S_{d,i})-(S_i,A_{e,i},A_{d,i})-U_i$.
\item $X_i-(U_i,S_{e,i},A_{d,i})-(A_{e,i},S_i,S_{d,i})$.
\item $Y_{i}-(X_i,S_{i})-(U_i,A_{d,i},A_{e,i},S_{e,i},S_{d,i})$.
\end{itemize}
These are proved in Appendix \ref{markov2}. 
\item As $U_i$ contains $A^n_e$, maximization is unaffected if we replace
$P_{A_e|U,A_d}$ with $\1_{\{A_e=g(U,A_d)\}}$. Since $I(U;Y,S_d|A_d)$ is
convex in $P_{Y,S_d|U,A_d}$, this implies convexity in
$P_{X|U,S_e,A_d}$. hence again maximum would
be unaffected if general $P_{X|U,S_e,A_d}$ is replaced with
 $X=f(U,S_e,A_d)$.
\item \textit{Cardinality Bounds on U} That set $\mathcal{U}$
needs no more than $\card{\mathcal{Y}}\card{\mathcal{S}_d}\card{\mathcal{A}_d}$
follows from arguments in \cite{Salehi}. Also $\mathcal{U}$ needs
$\card{\mathcal{S}}\card{\mathcal{S}_e}\card{\mathcal{A}_e}
\card{\mathcal{A}_d} \card{\mathcal{S}_d} \card{\mathcal{X}}
-1$ to preserve $P_{S,A_e,A_d,S_e,S_d,X}$ (which preserves
$H(Y_d,S_d|A_d))$, one
element to preserve $H(Y,S_d|A_d,U)$, one element to preserve independence of
$S$ and $A_d$ and three more to preserve the markov
chains, $(U,A_e)-A_d-S$, $(S_e,S_d)-(S,A_e,A_d)-U$ and
$X-(U,S_e,A_d)-(S,S_d,A_e)$.
\end{itemize}
The proof is then completed by using Eq. (\ref{mutual2}), (\ref{fano2}) and
(\ref{costcap2}).
\end{proof}
\begin{note}
 We can consider a more general setting where encoder and decoder feedback
logic depend upon the respective past state observations, i.e., encoder takes
actions, $A_{e,i}(M,S_e^{i-1})$, while decoder takes actions,
$A_{d,i}(Y^{i-1},S_d^{i-1})$. While the achievability remains unchanged as in
Theorem \ref{theorem4}, it is easy to see the converse also hold with
$U_i=(M,Y^{i-1},S_d^{i-1},S_e^{i-1},A_d^{i-1},A_e^i)$.
\end{note}

\begin{note}[Computer Memory with Defects : Non-causal Probing at both Encoder
and Decoder] : Consider a computer memory with defects, as in what the encoder
writes,
$X$ and what the decoder reads, $Y$ are related to each other through a
discrete memoryless channel, $P_{Y|X,S}$, where state $S$ models defects. If
there are no cost constraints to acquire the information about defects, encoder
and decoder are better-off by coding and decoding using this entire state
sequence $S^n$ as it is available before writing and reading on the memory. Note
that we assume neither the writing nor the reading operation changes the state.
However when acquisition of this state information by the encoder as well as
the decoder is cost constrained, encoder can take actions, $A_{e,i}(M)$ to get
partial state information $S_e^n$ and then write $X_i(M,S^n_e)$ while decoder
can wait for entire memory to be written and then take actions, $A_{d,i}(Y^n)$.
It will then obtain its side information $S_d^n$. Hence the setup remains
similar as depicted in Fig. \ref{ProbingAction}, the only difference from
the setup in Section \ref{causal} is that encoder now uses the partial
state information, CSIT,
non-causally to generate input symbols, i.e. $f_e : (M,S^n_e)\rightarrow
X_i\in\mathcal{X}$, while decoder takes action based on entire channel output
sequence, i.e., $f_{A_d}:Y^{n}\rightarrow A_{d,i}\in
\mathcal{A}_d$. Also in order to avoid issues of instantaneous dependency, we
must have,
\bea
P_{S_e,S_d|S,A_e,A_d}=P_{S_e|S,A_e}\times P_{S_d|S,S_e,A_e,A_d}
\eea
\newline 
\textit{ Equivalence to Relay Problem } \newline
The above problem is in general a hard one. Consider a special case where
$A_e$ is binary, with cost function $\Lambda(A_e,A_d)=\Lambda(A_e)$. For this
case, the zero cost and unit
cost corner cases are themselves open with only bounds. When cost is unity,
this is the case of relay channel with states and infinite lookahead with
states known non causally to the encoder. For the standard relay channel (no
infinite lookahead) with states known to encoder, Zaidi and Vanderdorpe in
\cite{ZaidiVandendorpe} lower bound the capacity. For zero cost
the system is a special case of 'Relay Channel with Infinite Lookahead'. We
conclude by showing the equivalence of
this 
problem at zero cost to that of \textit{Relay with Infinite Lookahead}, as
depicted in in Fig. \ref{CompareRelay} and Table II.
\begin{figure}[htbp]
\begin{center}
\scalebox{0.5}{\input{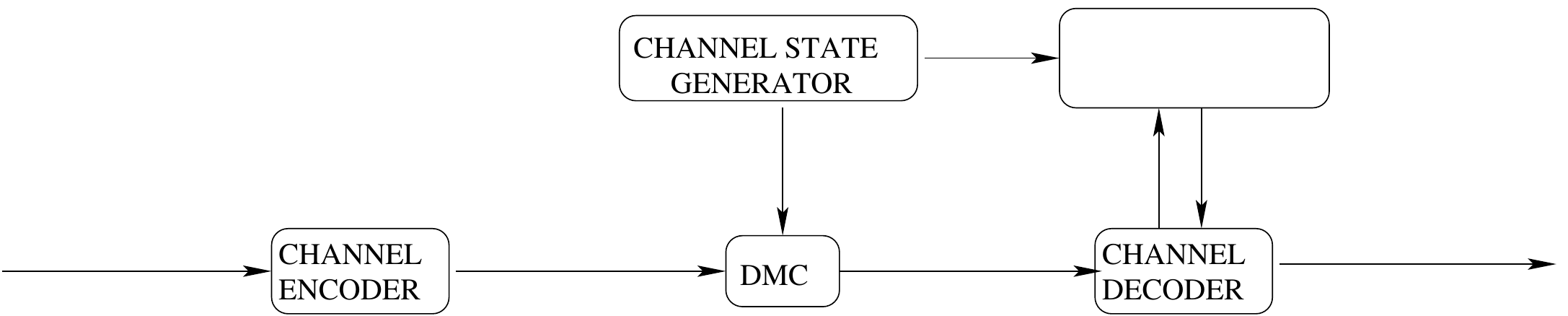_t}}
\caption{Decoder takes actions dependent upon the entire observed channel output
sequence
and uses the actions to aquire partial channel state information. Encoder has
no knowledge of channel states.}
\label{DecoderStateEstimation}
\end{center}
\end{figure}
\begin{figure}[htbp]
\begin{center}
\scalebox{0.5}{\input{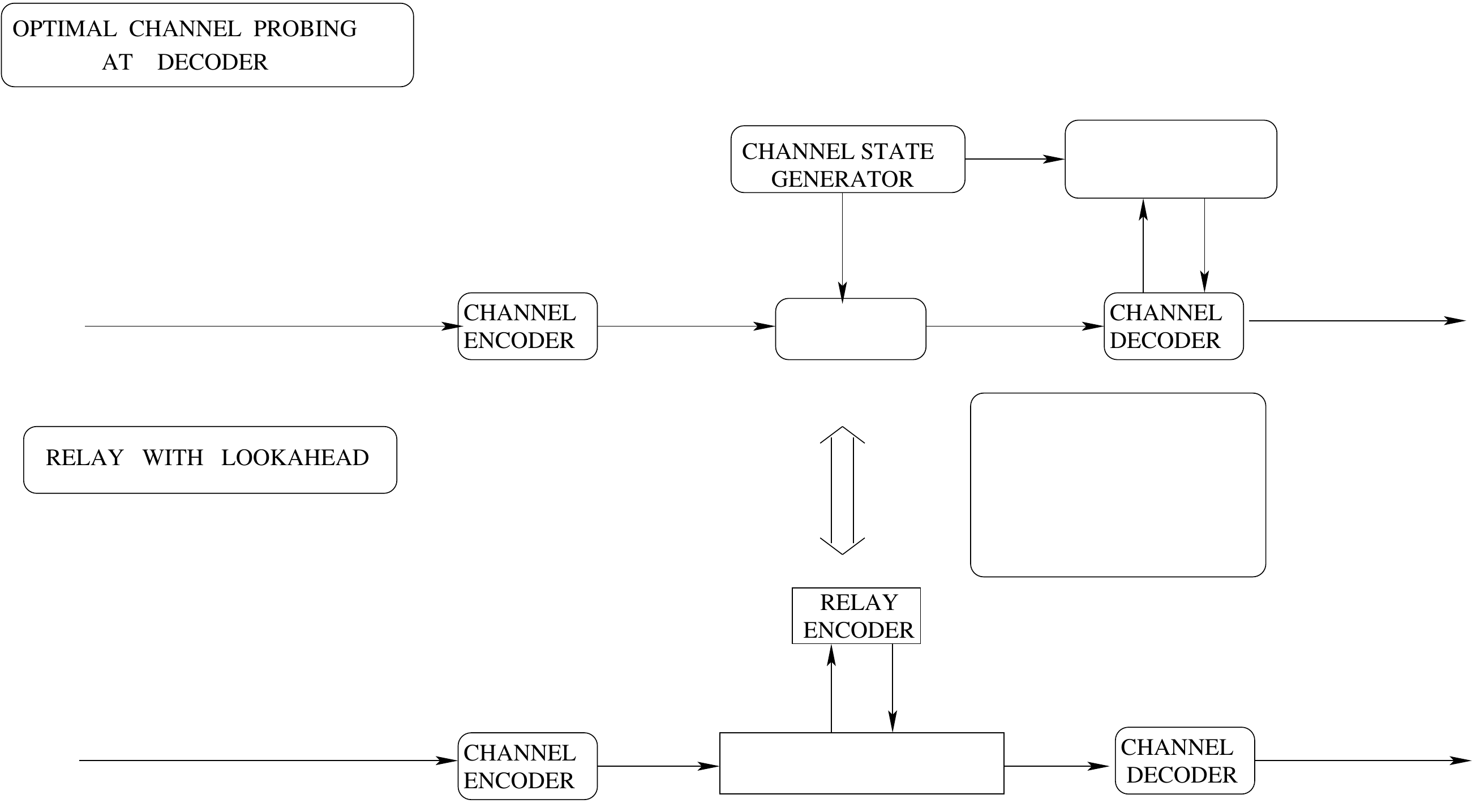_t}}
\caption{Equivalence of setting in Fig. \ref{DecoderStateEstimation} with
Relay with Infinite Lookahead.}
\label{CompareRelay}
\end{center}
\end{figure}

\begin{table}
\begin{center}
\caption{Equivalence of setting in Fig. \ref{DecoderStateEstimation} with
Relay with Infinite Lookahead \cite{GamalKim}.}
\begin{tabular}{|p{4cm}|p{4cm}|}
\hline
Relay with Infinite Lookahead (\cite{GamalKim}) & Decoder Probing in
Fig. \ref{DecoderStateEstimation}\\\hline
$\tilde{X}$ & $X$\\\hline
$\tilde{X_1}$ & $A$\\\hline
$\tilde{Y_1}$ & $Y$\\\hline
$\tilde{Y}$ & $(Y,S_d)$\\\hline
\end{tabular}
\end{center}
\end{table}
\end{note}

\section{Numerical Examples}
\label{example}
\subsection{Discrete Channels}

\subsubsection{[Non-causal Probing] : To Observe or Not to Observe Channel State
at Encoder, Decoder observes complete channel state.}
\begin{figure}[htbp]
\begin{center}
\scalebox{0.5}{\input{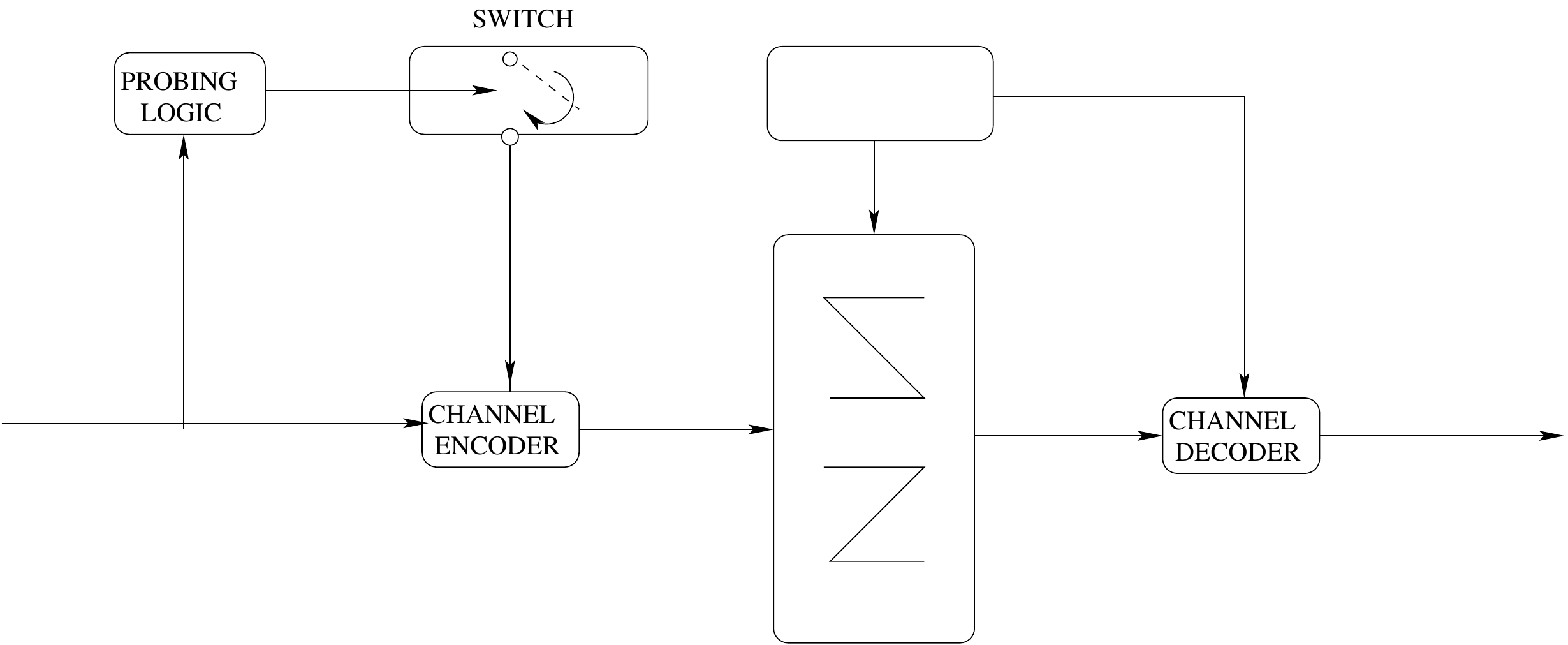_t}}
\caption{Example 1}
\label{example1}
\end{center}
\end{figure}
\begin{example}[Binary States, $S(\alpha)$ channel and $Z(\beta)$] 
Consider the communication system shown in Fig. \ref{example1} with binary input
and output. Decoder knows the state completely. Actions are binary which
correspond \textit{to observe or not to observe} state at encoder. Also the
cost function, $\Lambda(a)=a$, for actions, $a\in\{0,1\}$. We compute the
capacity using Theorem \ref{theorem1}. $S_e\in\{\ast,0,1\}$ and
$\alpha=\beta=\epsilon=0.5$. We assume the
following
\bea
P(X=0|S_e=\ast)=p_1,\\
P(X=0|S_e=0)=p_2,\\
P(X=0|S_e=1)=p_3.
\eea
As $C(\Gamma)$ is non decreasing in $\Gamma$. $P(A=1)=\Gamma$. We obtain for
$\Gamma\in[0,1],$
\bea
\label{opteq}
&&C(\Gamma)\nonumber\\
&=&\max_{p_1,p_2,p_3\in[0,1]}
[\epsilon h_2\left(\alpha((1-\Gamma)p_1+\Gamma
p_2)\right)\nonumber\\
&-&\epsilon((1-\Gamma)p_1+\Gamma p_2)h_2(\alpha)\nonumber\\
&+&(1-\epsilon)h_2\left(\beta((1-\Gamma)(1-p_1)+\Gamma
(1-p_3))\right)\nonumber\\
&-&(1-\epsilon)((1-\Gamma)(1-p_1)+\Gamma (1-p_3))h_2(\beta)].
\eea
We compute the above expression numerically (Fig. \ref{examplefig1}).
\begin{figure}[htbp]
\begin{center}
\includegraphics[scale=0.5]{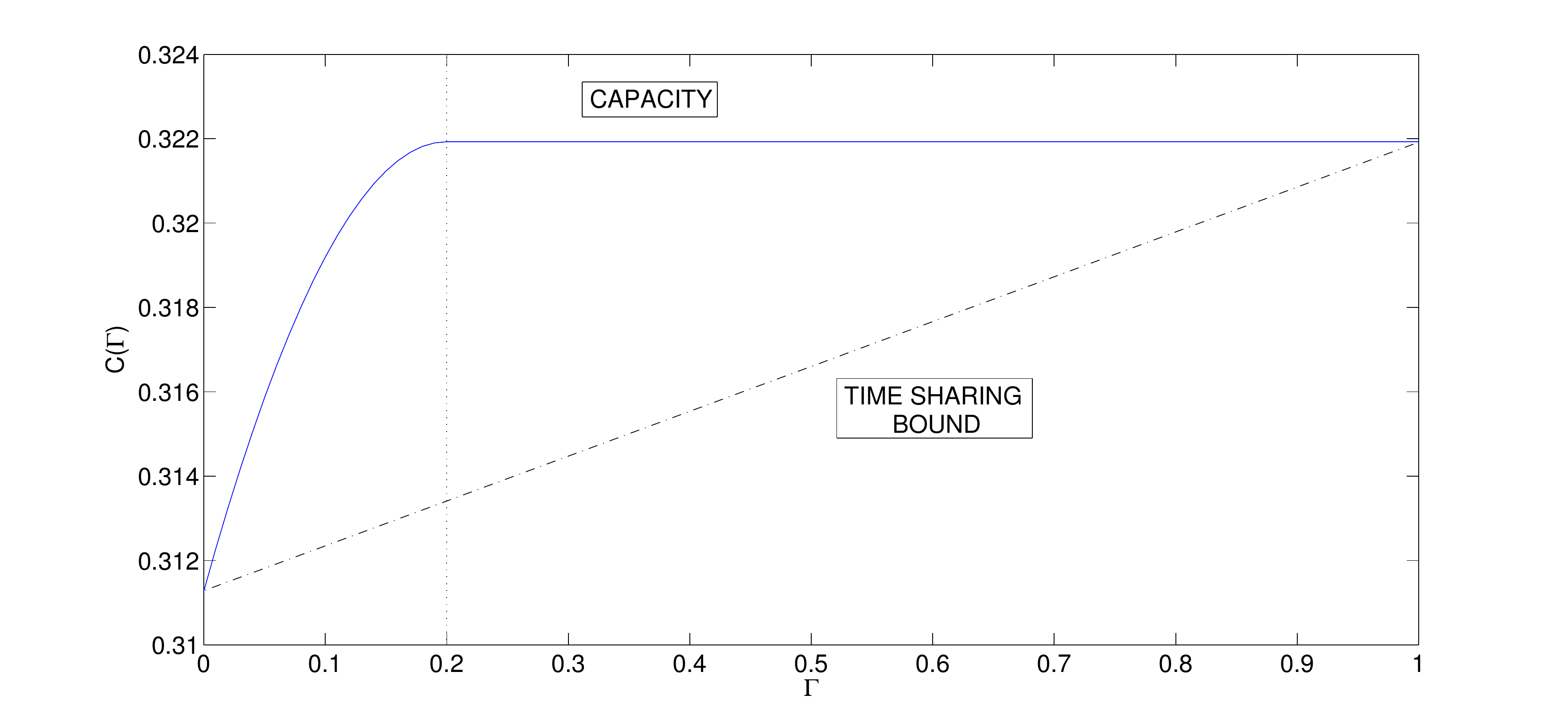}
\caption{Cost-capacity trade off for Example 1. Time sharing is strictly
sub-optimal.}
\label{examplefig1}
\end{center}
\end{figure}
\end{example}

\begin{note}[Cut-off point $\approx 0.2$ in Fig. \ref{examplefig1}]
 An observation from this example which is really surprising is that in order to
achieve the maximum capacity (which is at 
$\Gamma=1)$ one needs to only observe a fraction of states $\approx 0.2$. This
threshold however can also be theoretically derived. Essentially we find out
the 
range of $\Gamma\in[0,1]$ for which the capacity achieving joint distribution in
$C(\Gamma)$ induces exactly the same marginals, 
$P_{X|S}$ as when the cost is unity. Let $p^{\ast}_1$, $p^{\ast}_2$ and
$p^{\ast}_3$ be optimal distributions for cost $\Gamma$ as in Eq. \ref{opteq}.
The marginals are equal to
\bea
P(X=0|S=0)=(1-\Gamma)p^{\ast}_1+\Gamma p^{\ast}_2\\
P(X=0|S=1)=(1-\Gamma)p^{\ast}_1+\Gamma p^{\ast}_3.
\eea
 For $\Gamma=1$, we can easily compute 
$P(X=0|S=0)=0.4$ and $P(X=0|S=1)=0.6$. Therefore for marginals to be same,
\bea
(1-\Gamma)p^{\ast}_1+\Gamma p^{\ast}_2=0.4\\
(1-\Gamma)p^{\ast}_1+\Gamma p^{\ast}_3=0.6,
\eea
or
\bea
\Gamma (p^{\ast}_3-p^{\ast}_2)=0.2.
\eea
Since $p^{\ast}_2,p^{\ast}_3\in[0,1]$, it is easy to see that if the cost
$\Gamma\stackrel{>}{\sim}0.2$, we can find $(p^{\ast}_1,p^{\ast}_2,p^{\ast}_3)$
such that $C(\Gamma)=C(1)$. At $\Gamma=0.2$, optimal scheme is 
$X=S_e\oplus 1\mbox{ if }S_e\neq\ast$, and $Bern(0.5)$ otherwise.
\end{note}

\subsubsection{[Causal Probing] : To Observe or Not to Observe Channel State
at Encoder , with no channel state at the Decoder.}
\begin{figure}[htbp]
\begin{center}
\scalebox{0.5}{\input{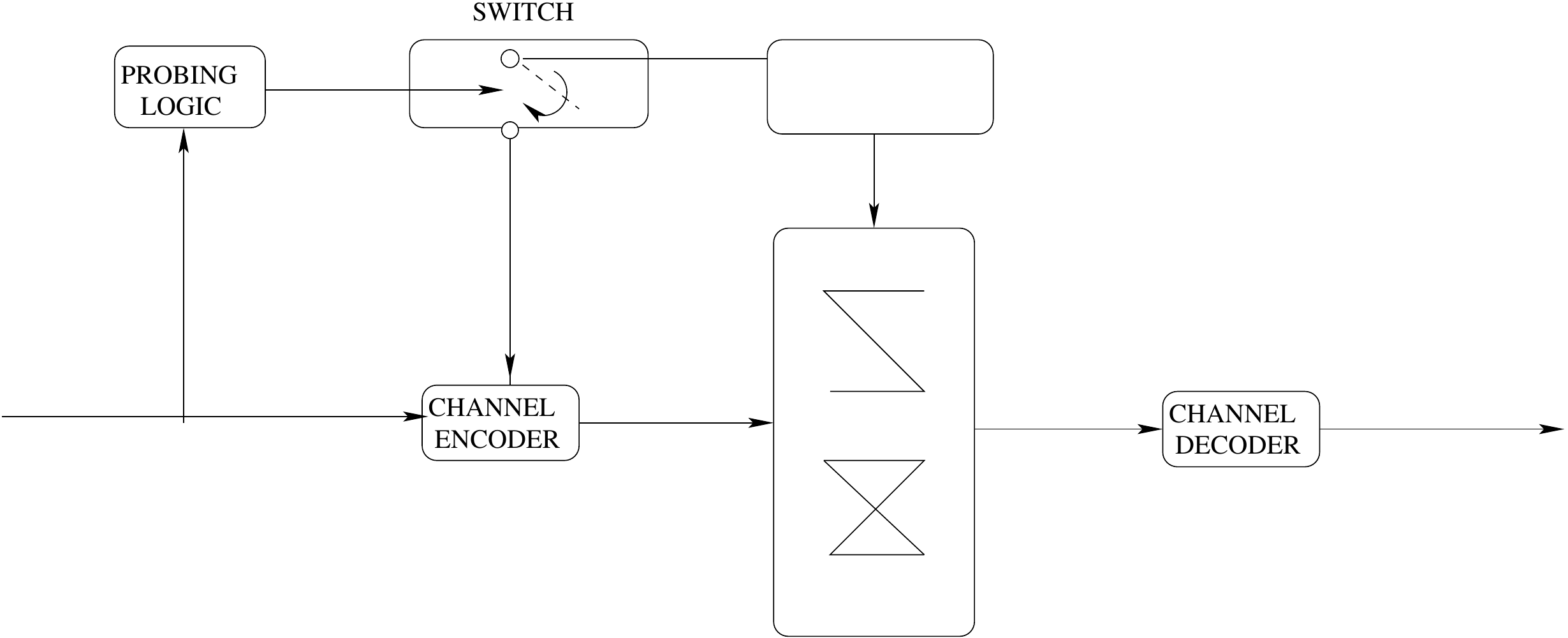_t}}
\caption{Example 2}
\label{example2}
\end{center}
\end{figure}
\begin{example}[Binary States, $S(\alpha)$ channel and $BSC(\delta)$]
Consider the communication system shown in Fig. \ref{example2} with binary input
and output with $\epsilon=0.5$, $\alpha=0.1$ and $\delta=0.3$. Here states are
not known to the decoder and encoder uses partial
state information causally to generate channel input symbols. Actions are
binary with cost, $\Lambda(a)=a$. $A=1$
corresponds to an observation of the channel state while $A=0$ to a lack of an
observation. The
evaluation of
capacity expression involves an auxiliary random variable. We compute its
lower bound on capacity numerically using Theorem as shown in Fig.
\ref{examplefig2}. Here also clearly time sharing is not optimal.
\begin{note}
 Note the interesting phenomenon in this example too (as in Example 1), where we
just need to observe roughly a fraction of state $\sim 0.5$ to obtain the
capacity at unit cost. This can be reasoned in a similar manner as reasoned for
Example 1.
\end{note}
\begin{figure}[htbp]
\begin{center}
\includegraphics[scale=0.5]{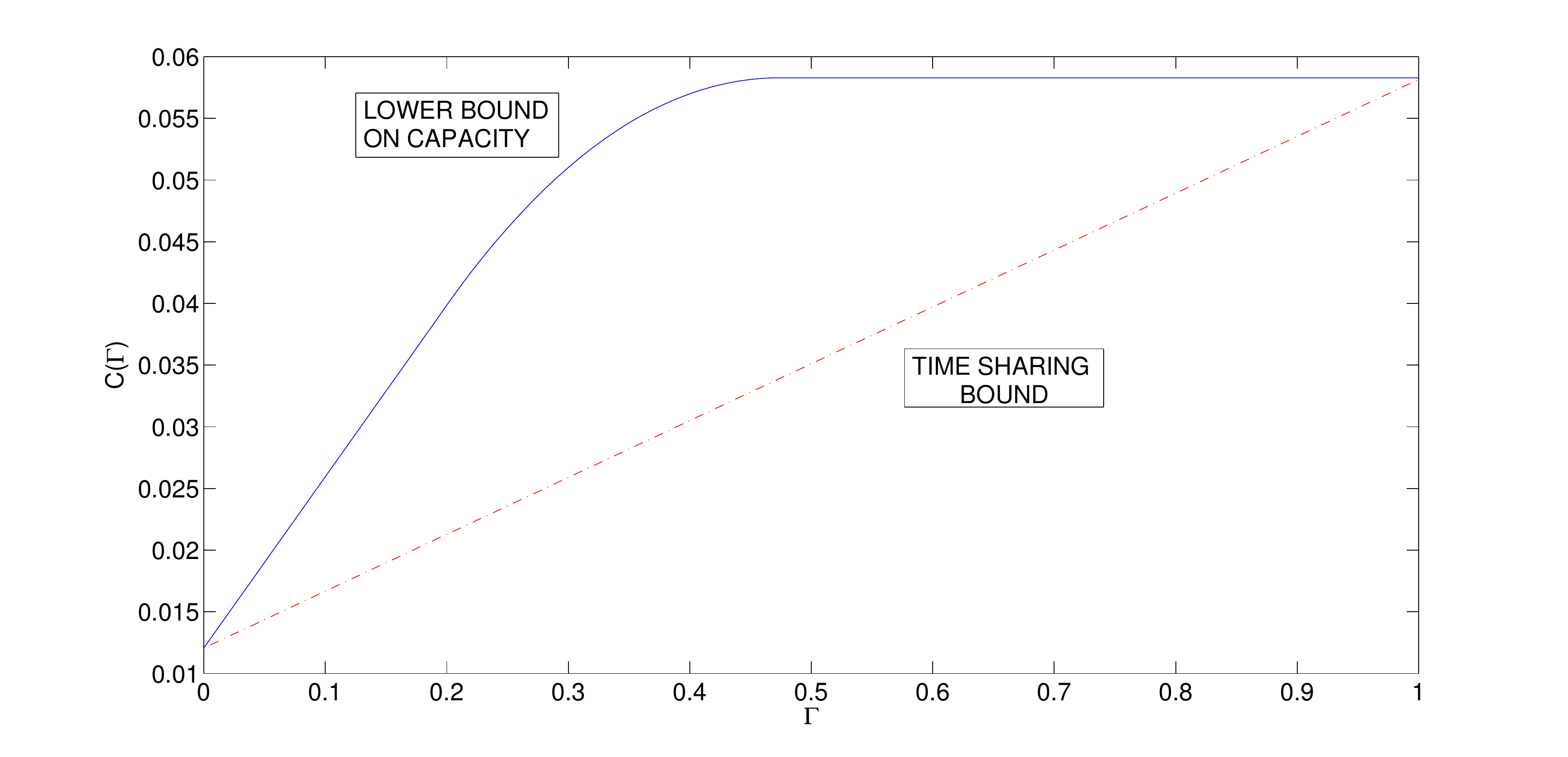}
\caption{Cost-capacity trade off for Example 2. The dotted straight line is
obtained by time sharing between zero cost and unit cost capacity (Scheme 1).
Time sharing between a scheme for which $A=g(U)=U$ in
Theorem \ref{theorem2} (call it Scheme 2) and Scheme 1 gives a lower bound on
the capacity
indicated by solid line. It is evident that naive Scheme 1
(time sharing scheme between extreme capacities at zero and unit cost) is
strictly sub-optimal.}
\label{examplefig2}
\end{center}
\end{figure}
\end{example}

\begin{example}[Binary States, Multiplier Channel with Power Constraints.]
Consider a multiplier channel with binary inputs, outputs and states, $Y=S\cdot
X$ where  
$S\sim Bern(0.5)$. Again note that
actions are binary with $\Lambda(a)=a$ and $A=1$
corresponds to an observation of the channel state while $A=0$ to a lack of an
observation. Let 
\bea
p_{\ast}&=&p(x=1|s_e=\ast)\\
p_0&=&p(x=1|s_e=0)\\
p_1&=&p(x=1|s_e=1).
\eea
We see that capacity under the power constraint,
\bea
p(x=1)\le P_0\in[0,1],
\eea
is 
\bea
C(\Gamma,P_0)&=&\max \frac{1}{2}h_2\left[(1-\Gamma)p_\ast+\Gamma p_1\right]\\
&&\mbox{subject to}\nonumber\\
&& (1-\Gamma)p_\ast+\frac{\Gamma}{2} (p_0+p_1)=P_0
\eea
For $P_0=0.25$, we have
\bea
C(\Gamma,P_0=0.25)&=&0.5h_2\left[\frac{1+2\Gamma}{4}\right]\mbox{ if
}\Gamma\in[0,0.5]\\
C(\Gamma,P_0=0.25)&=&0.5 \mbox{ if }\Gamma\in[0.5,1].
\eea
The plot for $P_0=0.25$ is shown in Fig. \ref{examplefig3}.
\begin{figure}[htbp]
\begin{center}
\includegraphics[scale=0.35]{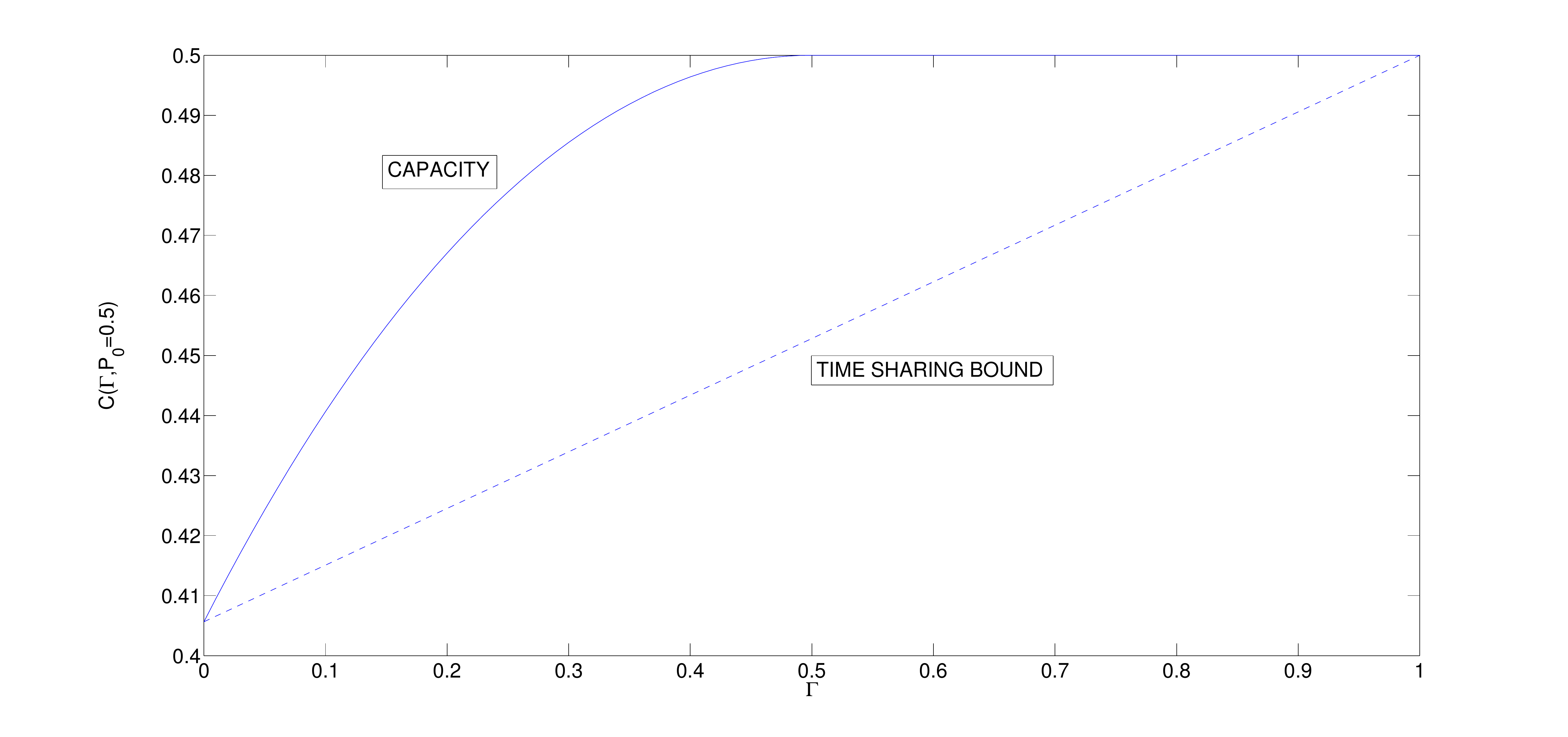}
\caption{Cost-capacity trade off for Example 3 for $P_0=0.25$. The dotted
straight
line is obtained by time sharing between zero cost and unit cost capacity. 
}\label{examplefig3}
\end{center}
\end{figure}
\end{example}

\subsection{Continuous Channels}
\subsubsection{\textquoteleft Learning\textquoteright\ to Write on a Dirty Paper
: }Using standard arguments,
it can be shown that the capacity results carry over to the case of continuous
channels with power 
constraints on input symbols. Let us recall the setting in \textit Dirty Paper
Coding\textit. Costa in 
\cite{CostaDPC} considered the communication system as in Fig.
\ref{DPC}.
\begin{figure}[htbp]
\begin{center}
\scalebox{0.5}{\input{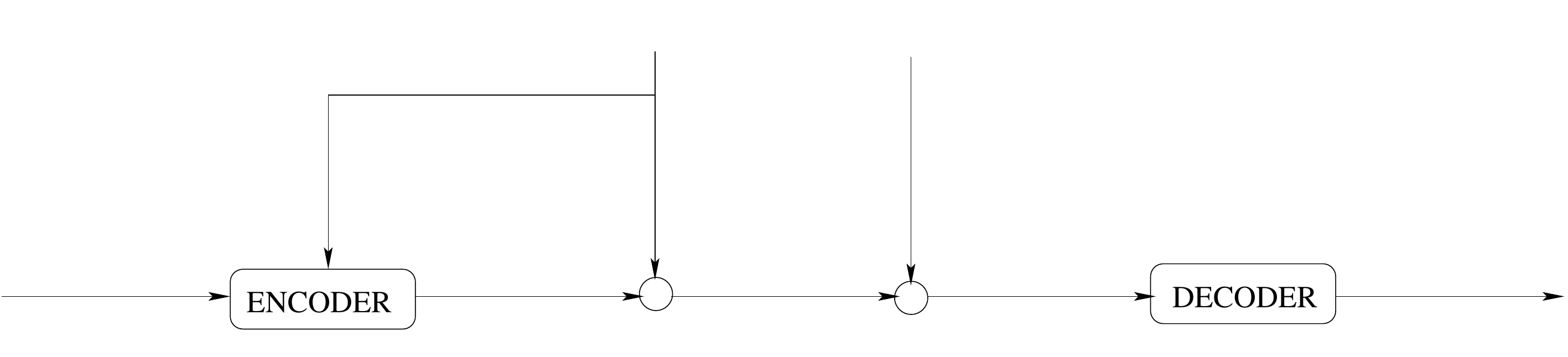_t}}
\caption{Dirty Paper Coding as in \cite{CostaDPC}}
\label{DPC}
\end{center}
\end{figure}
The output of the channel is given as $Y^n=X(M,S^n)+S^n+Z^n$, where 
\begin{itemize}
 \item Channel state or Interference $S^n$ is i.i.d. $S^n\sim\mathcal{N}(0,QI)$
independent of i.i.d. 
noise, $Z^n\sim\mathcal{N}(0,NI)$.
 \item Channel state or interference is known to the encoder non-causally.
Encoder hence generates 
channel inputs $X^n(M,S^n)$ which are cost constrained, i.e.,
$\frac{1}{n}\sum_{i=1}^n X_i^2\le P$.
 \item Decoder has no knowledge of channel state or interference. 
\end{itemize}
It was shown that the capacity of this channel is
$C(P/N)=\frac{1}{2}\log_2(1+P/N)$ which is equal 
to the capacity of a standard gaussian channel with signal to noise ratio $P/N$.
This is strictly larger
 than the capacity when $S^n$ is unknown to both encoder and decoder, i.e.,
 $\frac{1}{2}\log_2(1+P/(Q+N))$. 
\par
We now consider the setting as in Fig. \ref{ADPC}.
\begin{figure}[htbp] 
\begin{center}
\scalebox{0.5}{\input{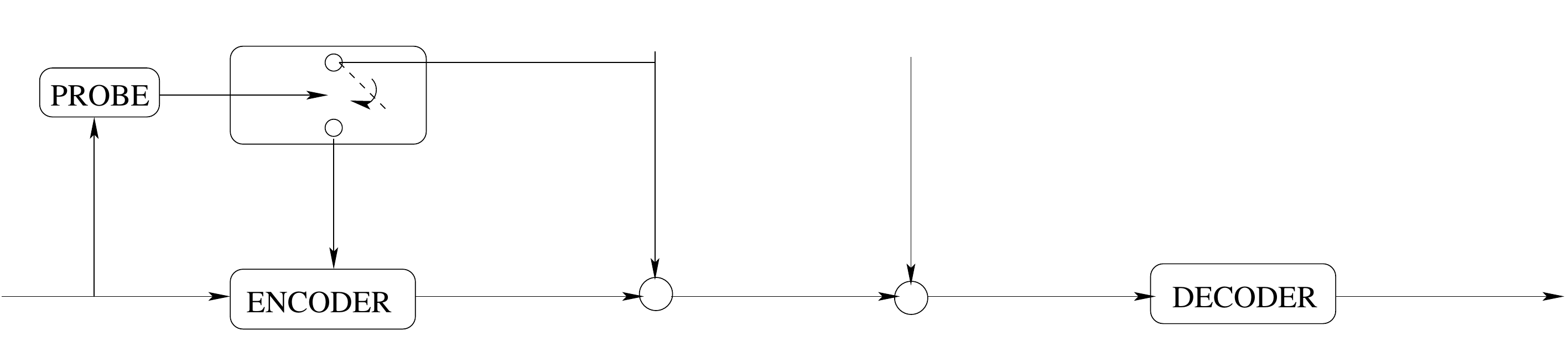_t}}
\caption{Learning to write on a Dirty Paper.}
\label{ADPC}
\end{center}
\end{figure}
 While in 
\textit{Writing on Dirty Paper}, 
it was assumed that interference or channel state was completely available, but
this might not be
true in real systems one might have to pay a price to acquire this information.
Hence in contrast 
to writing on a paper where intensity and positions of all dirt spots are known,
we have to take action
to learn where the paper is most dirty, hence the name \textit{Learning to Write
on a Dirty Paper}. 
Actions are binary, with cost function, $\Lambda(a)=a$. Here also $A=1$
corresponds to an observation of the channel state while $A=0$ to a lack of an
observation. Also, 
\bea
S_e=h(S,A)&=&\ast\mbox{ if }A=0\\
S_e=h(S,A)&=&S\mbox{ if }A=1, 
\eea
where $\ast$ stands for erasure or no information.
\par
Invoking Theorem  \ref{theorem3}, we have the capacity,
\bea
C(\Gamma,P)=\max[I(U;Y)-I(U;S_e|A)]=\max[I(A,U;Y)-I(U;S_e|A)],
\eea
where maximization is over joint distribution,
\bea
f_{A,U,S,S_e,X,Y}=P_A f_S\1_{\{S_e=h(S,A)\}}\1_{\{X=f(U,S_e)\}}f_{Y|X,S}
\eea
such that, $p(A=1)\le\Gamma$ and $\E[X^2]\le P$. We give a lower bound on this
capacity by considering 
a simple \textit{power splitting} achievable scheme. Let us assume
$X|(A=0)\sim\mathcal{N}(0,P_1)$ and 
$X|(A=1)\sim\mathcal{N}(0,P_2)$. Clearly $C(\Gamma,P)$ is maximized when
$p(A=1)=\Gamma$. Therefore we have from power constraints,
\bea
(1-\Gamma)P_1+\Gamma P_2\le P.
\eea
Further we assume, given action $A$, channel input $X$ is independent of
$U,S,Z$. Let
\bea
U|(A=0)&=&X|(A=0)\\
U|(A=1)&=&X|(A=1)+\alpha(P_2) S,
\eea
where $\alpha(P_2)=P_2/(P_2+1)$. Since $Y=X+S+Z$, we have,
\bea
Y|A=0 \sim g_{0}&=&\mathcal{N}(0,P_1+Q+N)\\
Y|A=1 \sim g_{1}&=&\mathcal{N}(0,P_2+Q+N)\\
Y \sim g&=&(1-\Gamma)\mathcal{N}(0,P_1+Q+N)+\Gamma\mathcal{N}(0,P_2+Q+N).
\eea
Considering this distribution gives the following lower bound on capacity, 
\bea
C_{lower}&=&\max_{P_1,P_2}[I(A,U;Y)-I(U;S_e|A)]\\
&=&\max_{P_1,P_2}[I(A;Y)+I(U;Y|A)-I(U;S_e|A)]\\
&\stackrel{(a)}{=}&\max_{P_1,P_2}[h(g)-(1-\Gamma)h(g_0)-\Gamma h(g_1)
+(1-\Gamma)I(X;Y|A=0)+\Gamma(I(U;Y|A=1)-I(U;S|A=1))]\nonumber\\
&\stackrel{(b)}{=}&\max_{P_1,P_2}[h(g)-(1-\Gamma)h(g_0)-\Gamma h(g_1)
+(1-\Gamma)C(P_1/(Q+N))+\Gamma C(P_2/N)],
\eea
where 
\begin{itemize}
 \item (a) follows from the fact that $S_e$ is just erasure for $A=0$, while for
$A=1$ is equal to
 $S$. $h(g)$ denotes the differential entropy of a continuous random variable
with distribution $g$.
 \item (b) follows from the fact that when $A=0$,
\bea
I(X;Y|A=0)&=&h(Y|A=0)-h(Y|X,A=0)\\
&=&h(\mathcal{N}(0,P_1+Q+N))-h(\mathcal{N}(0,Q+N)\\
&=&\frac{1}{2}\log_2(1+P/(Q+N))=C(P/(Q+N)),
\eea
while for $A=1$ following the similar steps as in \cite{CostaDPC}[Eq. 3,4,5,6,7]
we obtain,
\bea
I(U;Y|A=1)-I(U;S|A=1)=\frac{1}{2}\log_2(1+P/N)=C(P/N).
\eea
Fig. \ref{exampleDPC} shows the plot of $C_{lower}$ with $\Gamma$ for $P=Q=N=1$,
which indeed performs better than
naive time sharing between $C(P/N)$ and $C(P/(Q+N))$.
\begin{figure}[htbp]
\begin{center}
\includegraphics[scale=0.35]{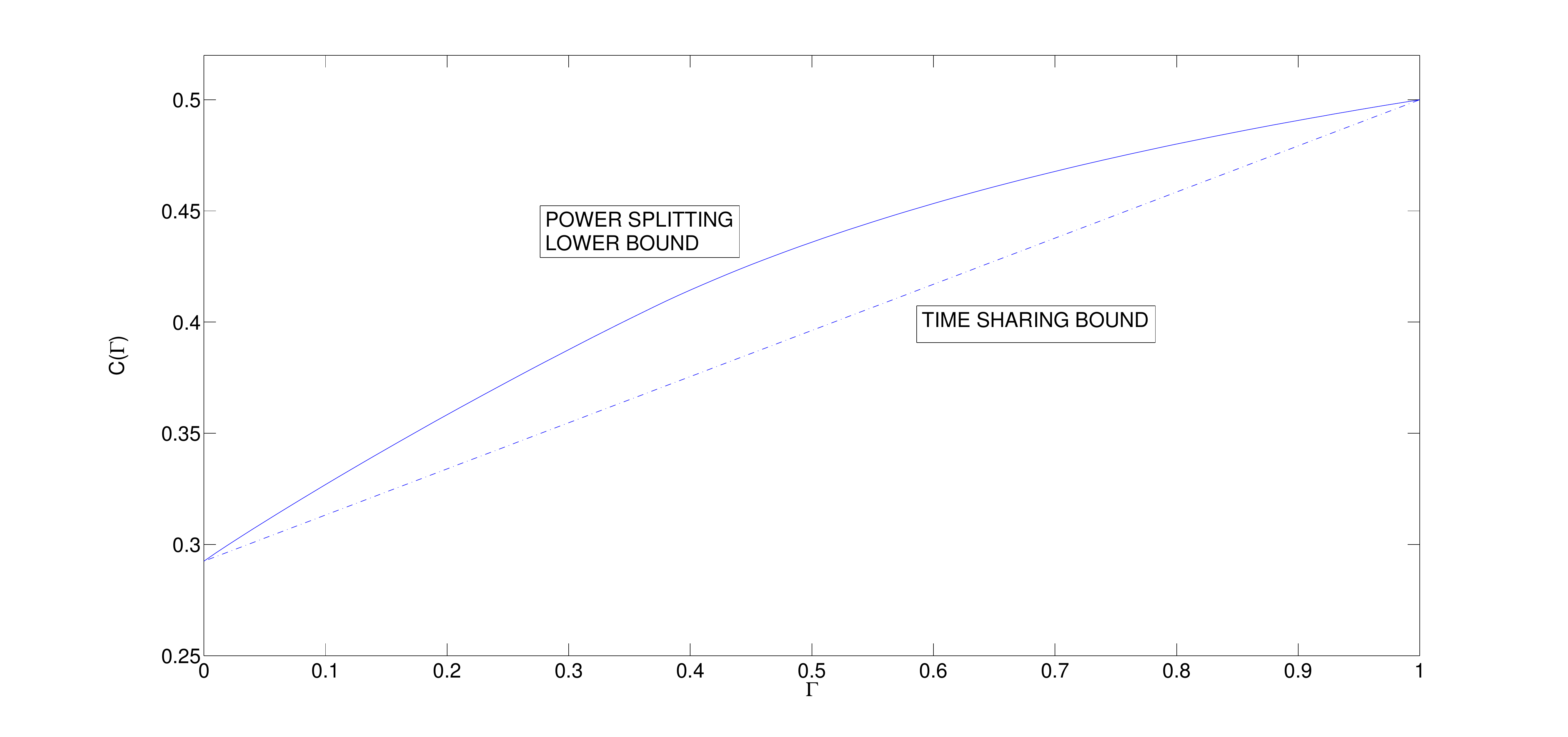}
\caption{Power Splitting lower bound on capacity for \textit{Learning to Write
on Dirty Paper}
 in Fig. \ref{ADPC}.}
\label{exampleDPC}
\end{center}
\end{figure}
\end{itemize}

\subsubsection{Fading Channels with Power Control}
We revisit the setting of fading channels with encoder and decoder state
information as in \cite{GoldsmithVaraiya},
 but now the encoder takes actions to acquire the channel state from receiver
state estimation,
 while decoder already knows
the channel state. This is depicted in Fig. \ref{fading}. Here $g[i]$ denotes
the i.i.d. channel states 
which take value in a finite state, $\mathcal{S}=\{g_1,g_2\}$ with equal
probability. $n[i]$ is i.i.d.
 gaussian noise $\sim \mathcal{N}(0,N/2)$. Bandwidth for communication is $B$.
$\gamma_1=\frac{Pg_1}{NB}$ and $\gamma_2=\frac{Pg_2}{NB}$ are signal to noise
ratios, 
such that $\gamma_1<\frac{\gamma_2}{1+2\gamma_2}$. Actions are binary which
correspond \textit{to observe or not to observe} state at encoder with
cost 
functions $\Lambda(a)=a$ and cost constraint $\Gamma$. $f$ is defined as in
Theorem 
\ref{theorem1}. From results in
\cite{GoldsmithVaraiya}, we know that,
\begin{itemize}
 \item Capacity when only decoder knows the state information
\bea
C(0)=\frac{B}{2}\log_2(1+\gamma_1)+\frac{B}{2}\log_2(1+\gamma_2).
\eea
\item Capacity when encoder also knows the channel state (possibly through a
noiseless feedback 
from decoder) in addition to decoder,
\bea
C(1)=\frac{B}{2}\log_2(1+2\gamma_2).
\eea
\end{itemize}
The above capacities form the extreme cases of zero and unit cost respectively 
for the communication system in Fig. \ref{fading}. Using Theorem \ref{theorem1},
we have the capacity 
for the communication system in Fig. \ref{fading} with bandwidth $B$ as \bea
C=\max_{P_A f_{X|S_e}}2BI(X;Y|S)=\max_{P_A,
f_{X|S_e}}2B[h(Y|S)-h(\mathcal{N}(0,NB))],
\eea
such that $\E[\Gamma(A)]\le\Gamma$ and $\E[X^2]\le P$. Clearly maximum is
attained for 
$p(A=1)=\Gamma$. To obtain a lower bound we assume the following,
\bea
X|(S_e=\ast)&\sim&\mathcal{N}(0,P_{\ast})\\
X|(S_e=g_1)&\sim&\mathcal{N}(0,P_{1})\\
X|(S_e=g_2)&\sim&\mathcal{N}(0,P_{2}).
\eea
This implies,
\bea
Y|(S=g_1)&\sim&(1-\Gamma)\mathcal{N}(0,NB+P_{\ast}g_1)+\Gamma\mathcal{N}(0,
NB+P_1 g_1)\\
Y|(S=g_1)&\sim&(1-\Gamma)\mathcal{N}(0,NB+P_{\ast}g_2)+\Gamma\mathcal{N}(0,
NB+P_2 g_2),
\eea
with power constraints,
\bea
\E[X^2]=(1-\Gamma)P_{\ast}+\frac{\Gamma}{2}(P_1+P_2)\le P.
\eea
Hence a lower bound on capacity  is,
\bea
C_{lower}(\Gamma,P)=2B\max_{P_{\ast},P_1,P_2}\left[\frac{h(f_{Y|S=g_1})+h(f_{
Y|S=g_2 } ) } { 2 } -h(\mathcal { N
}(0,NB))\right],\nonumber\\
\mbox{subject to }(1-\Gamma)P_{\ast}+\frac{\Gamma}{2}(P_1+P_2)\le P.
\eea
We plot $C_{lower}(\Gamma,P)$ as a function of $\Gamma$ for $P=N=1$, and
$g_1=0.01, g_2=1$ in 
Fig. \ref{examplefading}.
\begin{figure}[htbp] 
\begin{center}
\scalebox{0.5}{\input{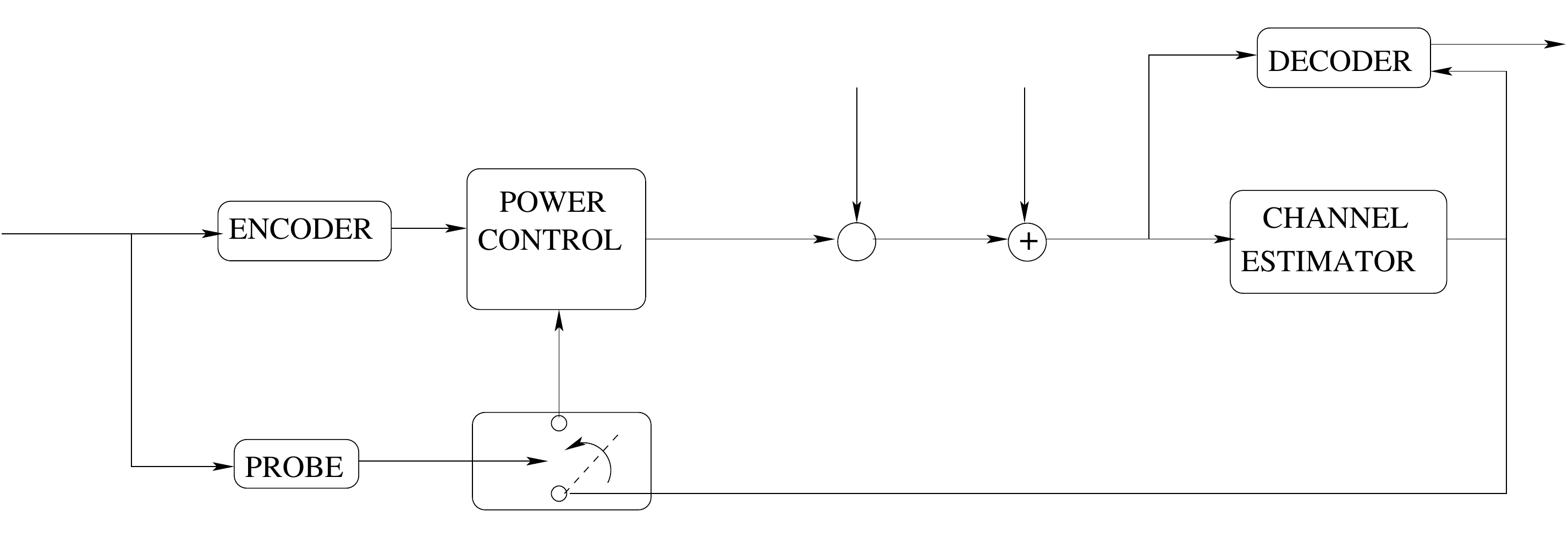_t}}
\caption{Fading channels with encoder taking actions to acquire channel state
for adaptive power control.}
\label{fading}
\end{center}
\end{figure}
\begin{figure}[htbp]
\begin{center}
\includegraphics[scale=0.35]{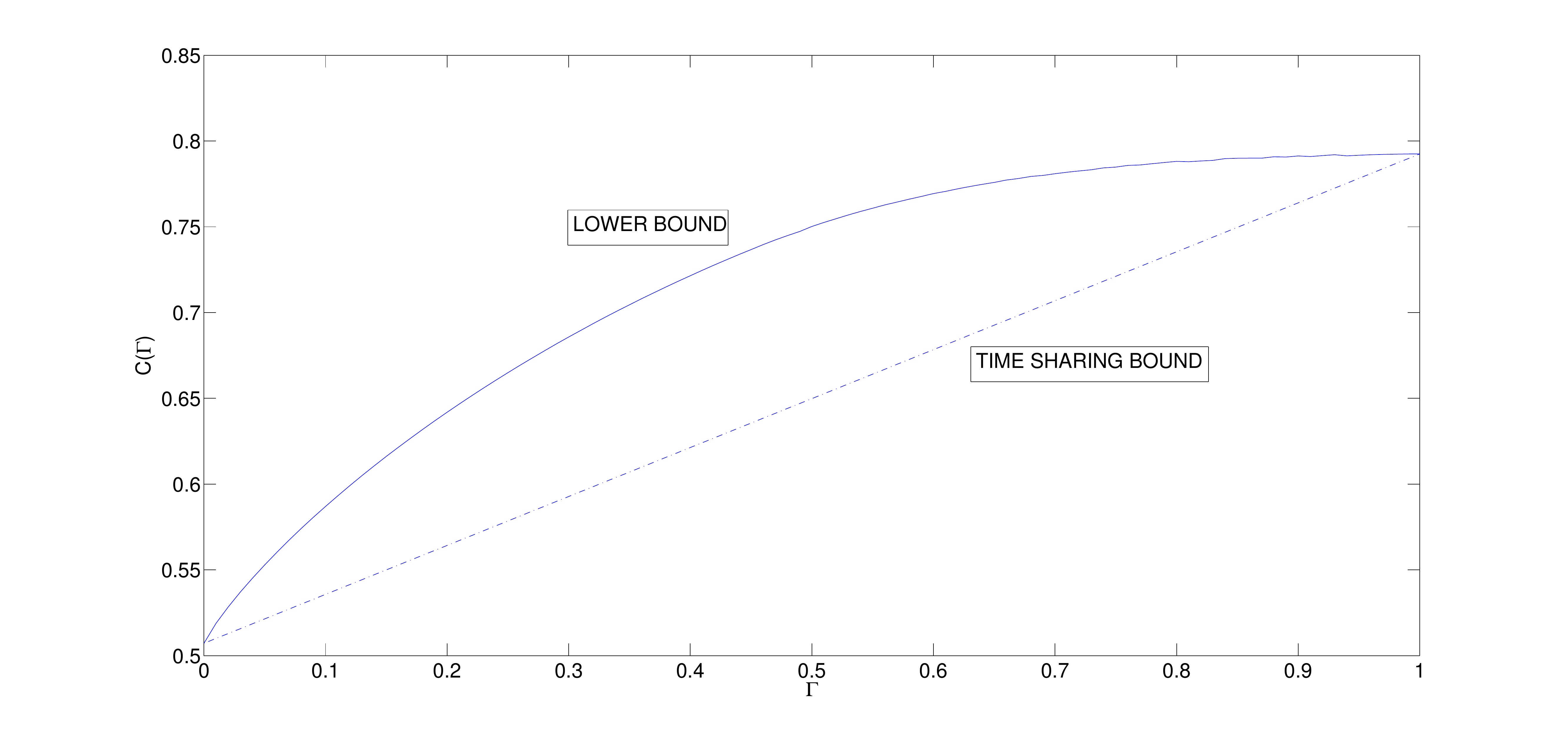}
\caption{Lower bound on fading channel communication system in Fig.
\ref{fading}. Time sharing is evidently highly sub-optimal.}
\label{examplefading}
\end{center}
\end{figure}

\section{Conclusion}
\label{conclusion}
In this work, we obtain 
\textquoteleft\textit{Probing Capacity}\textquoteright\ of systems which are
characterized as follows :
\begin{itemize}
 \item Channel is DMC with i.i.d states.
 \item Encoder takes \textit{costly} actions and probes the channel for channel
state information. This may be used causally or non-causally to generate
channel input symbols.
\item Decoder takes \textit{costly} actions and probes the channel to obtain
state information which is then used to construct message estimate.
\end{itemize}
We also worked out examples of discrete and continuous channels in cases where
only encoder probed the channel for states. We not only showed that a
naive time sharing scheme is strictly sub-optimal but also showed a pleasing
phenomenon (see Example 1. in Section \ref{example}) where one needs to
observe only a fraction of states to obtain maximum rate of transmission i.e.
rate when cost of state observation at encoder is not constrained.
\par
As directions of future work, following are important questions/conjectures
worth spending time and energy,
\begin{itemize} 
\item [1.] What if encoder actions depend on \textit{past sampled state},
i.e., 
$A_{e,i}=A_{e,i}(M,S_e^{i-1})$ for the case when partial state information is to
be used non-causally ? Can capacity be increased ?
\item [2.] What about probing capacity for channels with \textit{memory} ?
\item [3.] Does the Example 4 on \textquoteleft\textit{Learning
to write on a dirty paper}\textquoteright\ also support the pleasing phenomenon
when we can observe only a fraction of states and still achieve Costa's dirty
paper coding capacity, $C(P/N)$ ?
\item [4.] What if we take action \textit{to sample or not feedback} at
encoder or decoder for channels with memory ? 
\end{itemize}
Some of the results concerning sampling or not the feedback for finite state
channels (FSC) have been characterized in \cite{AsnaniHaimTsachyFeedback},
while the rest are under investigation.

\bibliography{fsc}

\begin{thebibliography}{10}
\providecommand{\url}[1]{#1}
\csname url@samestyle\endcsname
\providecommand{\newblock}{\relax}
\providecommand{\bibinfo}[2]{#2}
\providecommand{\BIBentrySTDinterwordspacing}{\spaceskip=0pt\relax}
\providecommand{\BIBentryALTinterwordstretchfactor}{4}
\providecommand{\BIBentryALTinterwordspacing}{\spaceskip=\fontdimen2\font plus
\BIBentryALTinterwordstretchfactor\fontdimen3\font minus
  \fontdimen4\font\relax}
\providecommand{\BIBforeignlanguage}[2]{{%
\expandafter\ifx\csname l@#1\endcsname\relax
\typeout{** WARNING: IEEEtran.bst: No hyphenation pattern has been}%
\typeout{** loaded for the language `#1'. Using the pattern for}%
\typeout{** the default language instead.}%
\else
\language=\csname l@#1\endcsname
\fi
#2}}
\providecommand{\BIBdecl}{\relax}
\BIBdecl

\bibitem{ShannonChannel}
C.~E. Shannon, ``Channels with side information at the transmitter,'' \emph{IBM
  J. Res. Dev.}, vol.~2, no.~4, pp. 289--293, 1958.

\bibitem{KuznetsovTsybakov}
A.~V. Kuznetsov and B.~S. Tsybakov, ``Coding in a memory with defective
  cells,'' \emph{Probl. Contr. and Inform. Theory.}, vol.~10, no.~2, pp.
  52--60, 1974.

\bibitem{GelfandPinsker}
S.~I. Gel'fand and M.~S. Pinsker, ``Coding for channel with random
  parameters,'' \emph{Problems of Control Theory}, vol.~9, no.~1, pp. 19--31,
  1980.

\bibitem{HeegardGamal}
C.~D. Heegard and A.~A. El~Gamal, ``On the capacity of computer memory with
  defects,'' \emph{IEEE Trans. Inf. Theor.}, vol.~29, no.~5, pp. 731--739,
  September 1983.

\bibitem{KeshetSteinbergMerhav}
G.~Keshet, Y.~Steinberg, and N.~Merhav, ``Channel coding in the presence of
  side information,'' \emph{Found. Trends Commun. Inf. Theory}, vol.~4, no.~6,
  pp. 445--586, 2007.

\bibitem{HaimTsachyVendor}
H.~H. Permuter and T.~Weissman, ``Source coding with a side information
  'vending machine' at the decoder,'' in \emph{ISIT'09: Proceedings of the 2009
  IEEE international conference on Symposium on Information Theory}.\hskip 1em
  plus 0.5em minus 0.4em\relax Piscataway, NJ, USA: IEEE Press, 2009, pp.
  1030--1034.

\bibitem{WynerZiv}
A.~D. Wyner and J.~Ziv, ``The rate-distortion function for source coding with
  side information at the decoder,'' \emph{IEEE Trans. Inform. Theory},
  vol.~22, pp. 1--10, 1976.

\bibitem{TsachyChannel}
T.~Weissman, ``Capacity of channels with action-dependent states,'' in
  \emph{ISIT'09: Proceedings of the 2009 IEEE international conference on
  Symposium on Information Theory}.\hskip 1em plus 0.5em minus 0.4em\relax
  Piscataway, NJ, USA: IEEE Press, 2009, pp. 1794--1798.

\bibitem{Kittichokechai}
K.~Kittichokechai, T.~Oechtering, M.~Skoglund, and R.~Thobaben, ``Source and
  channel coding with action-dependent partially known two-sided state
  information,'' in
  \emph{ISIT'10: Proceedings of the 2010 IEEE international conference on
  Symposium on Information Theory}, June 2010, pp. 629 --633.

\bibitem{Donoho}
D.~L. Donoho, ``Compressed sensing,'' \emph{IEEE Trans. Inform. Theory},
  vol.~52, pp. 1289--1306, 2006.

\bibitem{GoldsmithVaraiya}
A.~J. Goldsmith and P.~P. Varaiya, ``Capacity of fading channels with channel
  side information,'' \emph{IEEE Trans. Inform. Theory}, vol.~43, pp.
  1986--1992, Nov. 1997.

\bibitem{CsiszarKorner}
I.~Csiszar and J.~Korner, \emph{Information Theory: Coding Theorems for
  Discrete Memoryless Systems}.\hskip 1em plus 0.5em minus 0.4em\relax Orlando,
  FL, USA: Academic Press, Inc., 1982.

\bibitem{GamalKim}
A.~E. Gamal and Y.~H. Kim, ``Lecture notes on network information theory,''
  \emph{CoRR}, vol. abs/1001.3404, 2010.

\bibitem{CoverThomas}
T.~M. Cover and J.~A. Thomas, \emph{Elements of information theory}.\hskip 1em
  plus 0.5em minus 0.4em\relax New York, NY, USA: Wiley-Interscience, 1991.

\bibitem{Salehi}
M.~Salehi, ``Cardinality bounds on auxiliary variables in. multiple-user theory
  via the method of ahlswede and korner,'' Department of Statistics, Stanford
  University, Stanford, CA, Tech. Rep.~33, August 1978.

\bibitem{ZaidiVandendorpe}
A.~Zaidi, L.~Vandendorpe, and P.~Duhamel, ``Lower bounds on the capacity
  regions of the relay channel and the cooperative relay-broadcast channel with
  non-causal side information.'' jun. 2007, pp. 6005 --6011.

\bibitem{CostaDPC}
M.~Costa, ``Writing on dirty paper (corresp.),'' \emph{Information Theory, IEEE
  Transactions on}, vol.~29, no.~3, pp. 439 -- 441, may. 1983.

\bibitem{AsnaniHaimTsachyFeedback}
H.~Asnani, H.~H. Permuter, and T.~Weissman, ``To feed or not to feed back,'' in
  preparation.

\end{thebibliography}
\bibliographystyle{IEEEtran}

\appendices

\section{Concavity of Capacity in Cost}
\label{concavification}
We prove the concavity of cost constrained capacity in Theorem \ref{theorem4}
by concavification argument. Consider \textquoteleft
concavification\textquoteright\ of capacity in Theorem 
\ref{theorem4} as
\bea\label{cap3}
C^Q(\Gamma)=\max[I(U;Y,S_d|A_d,Q)],
\eea
where maximization is over all joint distributions of the form
\bea
&&P_{Q,S,A_d,U,A_e,S_e,X,Y,S_d}(s,a_d,u,a_e,s_e,x,y,s_d)\nonumber\\
&=&P_Q(q)P_S(s)P_{A_d|Q}(a_d|q)P_{U|A_d,Q}(u|a_d,q)\1_{\{a_e=g(u,a_d,q)\}}
\nonumber\\
&&P_{S_e|S,A_e}(s_e|s,
a_e)\1_{\{x=f(u,s_e,a_d,q)\}}P_{Y|X,S}P_{S_d|S,A_d}(s_d|s,a_d),
\eea
for some $P_Q,P_{A_d|Q},P_{U|A_d|Q},g,f$ such that
$\E[\Lambda(A_e,A_d)]\leq\Gamma$. Clearly $C^Q(\Gamma)\geq C(\Gamma)$. Left is
to prove $C^Q(\Gamma)\leq C(\Gamma)$.
\bea
I(U;Y,S_d|A_d,Q)&=&H(Y,S_d|A_d,Q)-H(Y,S_d|U,A_d,Q)\\
&\leq&H(Y,S_d|A_d)-H(Y,S_d|U,A_d,Q)\\
&=&I(U';Y,S_d|A_d),
\eea
where last inequality follows from the defining $U'=(U,Q)$. Proof is completed
by noting that the joint distribution of $(S,A_d,U',A_e,S_e,X,Y,S_d)$ is same as
that of $(S,A_d,U,A_e,S_e,X,Y,S_d)$.\newline

\section{Proof of Markov Chain $X_i-(S_{e,i},A_i)-S_i$}
\label{markov1}
Since
$X_i=X_i(M,S_e^n)$, it suffices to prove
 $(M,S_e^n)-(S_{e,i},A_i)-S_i$. We observe the joint distribution can be
factorized as,
\bea
P(M,A^n,S^n,S^n_e)&=&P(M)\prod_{i=1}^{n}P(S_i)P(A_i|M)P(S_{e,i}|S_i,A_i)\\
&=&\Phi_1(A^{n\backslash i},M,S_e^{n\backslash
i},S^{n\backslash i},A_i)\Phi_2(S_{i},S_{e,i},A_i)\\
&=&\Phi^{'}_1(A^{n\backslash i},M,S_e^{n},S^{n\backslash
i},A_i,S_{e,i})\Phi_2(S_{i},S_{e,i},A_i)
\eea
which implies the Markov Chain $(A^{n\backslash
i},M,S_e^{n},S^{n\backslash i})-(S_{e,i},A_i)-S_i$, which in turn
implies  $(M,S_e^{n})-(S_{e,i},A_i)-S_i$.

\section{Proof of Markov Chains in Theorem \ref{theorem4}}
\label{markov2}
We will prove the following markov chains,
\begin{enumerate}
\item[MC1] $U_i-A_{d,i}-S_i$.
\item[MC2] $A_{e,i}-(U_i,A_{d,i})-S_i$.
\item[MC3] $(S_{e,i},S_{d,i})-(S_i,A_{e,i},A_{d,i})-U_i$.
\item[MC4] $X_i-(U_i,S_{e,i},A_{d,i})-(A_{e,i},S_i,S_{d,i})$.
\item[MC5] $Y_{i}-(X_i,S_{i})-(U_i,A_{d,i},A_{e,i},S_{e,i},S_{d,i})$.
\end{enumerate}
MC3 and MC5 follow from the DMC assumption in problem definition. Now
for
the rest consider the induced probability distribution by the given encoding and
decoding scheme,
\bea
&&P_{M,A_e^n,S^n,S^n_e,X^n,Y^n,A_d^n,S^n_d}(m,a_e^n,s^n,s^n_e,x^n,y^n,a^n_d,
s^n_d
)\nonumber\\
&=&\frac{1}{\mathcal{M}}\1_{\{a_e^n=A_e^n(m)\}}\prod_{i=1}^{n}\1_{\{a_{d,i}=A_{d
,i}(y^{i-1})\}}P_{S}(s_i)P_{S_e,S_d|S
,A_e,A_d}(s_{e,i},s_{d,i}|s_i,a_{e,i},a_{d,i})
\nonumber\\
&&\times\prod_{i=1}^{n}\1_{\{x_{e,i}=X_{e,i}(m,s_e^i)\}}P_{Y|X,S}(y_i|x_i,
s_i)\label{induced1}.
\eea
Averaging over $(S^n_{i+1},S_{e,i}^n,X^n_{i},Y^n_i,S_{d,i}^n,A_{d,i+1}^n)$, we
get
\bea
&&P_{M,A_e^n,S^i,S^{i-1}_e,X^{i-1},Y^{i-1},A_d^{i-1},S^{i-1}_d}(m,a_e^n,s^i,s^{
i-1}_e,x^{i-1},y^{i-1},a^{i-1}_d,s^{i-1}_d
)\nonumber\\
&=&P_{S}(s_i)\times\frac{1}{\mathcal{M}}\1_{\{a_e^n=A_e^n(m)\}}\1_{\{a_{d,i}=A_{
d,i}(y^{i-1})\}}\prod_{j=1}^{i-1}\1_{\{a_{d,j}=A_{d,j}(y^{j-1})\}}P_{S}(s_j)P_{
S_e,S_d|S,A_e,A_d}(s_{e,j}s_{d,j}|s_j,a_{e,j},a_{d,j})\nonumber\\
&&\times\prod_{j=1}^{i-1}\1_
{\{x_{e,j}=X_{e,j}(m,s_e^j)\}}P_{Y|X,S}(y_j|x_j,s_j))\\
&=&\Phi_1(S_i)\Phi_2(M,A_e^n,S^{i-1},S^{i-1}_e,X^{i-1},Y^{i-1},A_d^{i},S^{i-1}
_d)\label{rel1}\\
&=&\Phi_1'(S_i,A_{d,i})\Phi_2(A_{d,i},U_i,X^{i-1})\label{MC1}.
\eea
Eq. (\ref{rel1}) implies $A_{d,i}$ is independent of $S_i$ while Eq.
(\ref{MC1}) implies markov chain $(U_i,X^{i-1})-A_{d,i}-S_i$ which in turn
implies
MC1. MC2 is straightforward as $U$ contains $A_e^n$. 
\par 
Now averaging over
$(S^n_{i+1},S_{e,i+1}^n,X^n_{i+1},Y^n_i,S_{d,i+1}^n,A_{d,i+1}^n)$ in Eq.
(\ref{induced1}) we obtain,
\bea
&&P_{M,A_e^n,S^i,S^{i}_e,X^{i},Y^{i-1},A_d^{i-1},S^{i-1}_d}(m,a_e^n,s^i,s^{i}_e,
x^{i},y^{i-1},a^{i-1}_d,s^{i-1}_d
)\nonumber\\
&=&P_{S}(s_i)P_{S_e,S_d|S,A_e,A_d}(s_{e,i},s_{d,i}|s_i,a_{e,i},a_{d,i}
)\nonumber\\
&&\times\frac{1}{\mathcal{M}}\1_{\{a_e^n=A_e^n(m)\}}\1_{\{x_{i}=X_{i}(m,
s_e^i)\}}\1_{\{a_{d,i}=A_{d,i}(y^{i-1})\}}\prod_{j=1}^{i-1}P_{S}(s_j)P_{S_e,
S_d|S ,
A_e,A_d}(s_{e,j},s_{d,j}|s_j,a_{e,j},a_{d,j})
\nonumber\\
&&\times\prod_{j=1}^{i-1}\1_{\{x_{e,j}=X_{e,j}(m,s_e^j)\}}P_{Y|X,S}(y_j|x_j,
s_j)\1_{\{a_{d,j}=A_{ d , j } (y^{j-1})\}}\\
&=&\Phi_1(S_i,S_{e,i},S_{d,i},A_{e,i},S_{e,i})\Phi_2(M,A_e^n,S^{i-1},S^{i-1}_e,
X^ { i } , Y^ { i-1 } ,
A_d^{i},S^{i-1}_d)\\
&=&\Phi'_1(S_i,A_{e,i},S_{d,i},U_i,S_{e,i},A_{d,i})\Phi'_2(U_i,S_{e,i},A_{d,i},
X^i , S^ {i-1}).
\eea
This implies the Markov Chain,
$(S^{i-1},X^i)-(U_i,S_{e,i},A_{d,i})-(S_i,A_{e,i},S_{d,i})$ which
implies MC4.

\end{document}